\documentclass[twocolumn,epsfig,graphics,noshowpacs,floatfix,nofootinbib]{revtex4}

\usepackage{amsmath,amsfonts,amssymb,graphics,graphicx,epsfig,color,times,bbm}
\usepackage{amsthm}
\usepackage{psfrag}
\usepackage{braket}
\AtBeginDocument{\usepackage{booktabs}}
\setlength{\tabcolsep}{8pt}
\graphicspath{{figures/}}
\usepackage[hidelinks]{hyperref}
\hypersetup{colorlinks=false}
\usepackage{array}
\newcolumntype{P}[1]{>{\centering\arraybackslash}p{#1}}
\newcolumntype{M}[1]{>{\centering\arraybackslash}m{#1}}

\usepackage{graphicx}
\usepackage{mathtools}
\usepackage{soul}
\usepackage[normalem]{ulem}

\newtheorem{definition}{Definition}
\newtheorem{theorem}{Theorem}

\newcommand{\zz}{\mathbbm{Z}}

\newcommand{\gauge}{\mathcal{G}}
\newcommand{\stabilizer}{\mathcal{S}}

\newcommand{\fusenet}{{G}}
\newcommand{\fusenetV}{{G}_V}
\newcommand{\fusenetE}{{G}_E}
\newcommand{\region}{A}
\newcommand{\resourcegroup}{\mathcal{R}}
\newcommand{\fusiongroup}{{\mathcal{F}}}
\newcommand{\checkgroup}{{\mathcal{C}}}
\newcommand{\complex}{\mathcal{L}}
\newcommand{\complexF}{\mathcal{L}}
\newcommand{\complexR}{\mathcal{L}_R}


\usepackage[framemethod=tikz]{mdframed}
\usepackage{hyperref}

\definecolor{warning_bgcol}{RGB}{252,248,229}
\definecolor{warning_textcol}{RGB}{111,89,54}
\definecolor{warning_linecol}{RGB}{252,248,229}
\definecolor{danger_bgcol}{RGB}{239,223,222}
\definecolor{danger_textcol}{RGB}{128,60,57}
\definecolor{danger_linecol}{RGB}{239,223,222}
\definecolor{success_bgcol}{RGB}{224,237,216}
\definecolor{success_textcol}{RGB}{68,104,60}
\definecolor{success_linecol}{RGB}{224,237,216}
\definecolor{info_bgcol}{RGB}{220,237,246}
\definecolor{info_textcol}{RGB}{58,100,126}
\definecolor{info_linecol}{RGB}{220,237,246}

\mdfdefinestyle{box_style}{
  skipabove=.7\baselineskip,
  skipbelow=.7\baselineskip,
  innertopmargin=.65\baselineskip,
  innerbottommargin=.65\baselineskip,
  innerleftmargin=.5\baselineskip,
  innerrightmargin=.5\baselineskip,
  splittopskip=1.5\baselineskip,
  splitbottomskip=\baselineskip,
  roundcorner=.3\baselineskip
}
\mdfdefinestyle{warning_style}{
  style=box_style,
  backgroundcolor=warning_bgcol,
  linecolor=warning_linecol,
  fontcolor=warning_textcol,
}
\mdfdefinestyle{success_style}{
  style=box_style,
  backgroundcolor=success_bgcol,
  linecolor=success_linecol,
  fontcolor=success_textcol,
}
\mdfdefinestyle{danger_style}{
  style=box_style,
  backgroundcolor=danger_bgcol,
  linecolor=danger_linecol,
  fontcolor=danger_textcol,
}
\mdfdefinestyle{info_style}{
  style=box_style,
  backgroundcolor=info_bgcol,
  linecolor=info_linecol,
  fontcolor=info_textcol,
}

\begin{document}

\title{Fault-tolerant complexes}

\newcommand*\leadauthor{\thanks{Lead authors. All authors are listed alphabetically. Correspondence: naomi@psiquantum.com}}

\author{H\'ector Bomb\'in}
\author{Chris Dawson}
\author{Terry Farrelly}
\author{Yehua Liu} \leadauthor 
\author{\\ Naomi Nickerson} \leadauthor 
\author{Mihir Pant}
\author{Fernando Pastawski}
\author{Sam Roberts}

\affiliation{PsiQuantum, Palo Alto}
\date\today

\begin{abstract}
Fault-tolerant complexes describe surface-code fault-tolerant protocols from a single geometric object. We first introduce \emph{fusion complexes} that define a general family of fusion-based quantum computing (FBQC) fault-tolerant quantum protocols based on surface codes. We show that any 3-dimensional cell complex where each edge has four incident faces gives a valid fusion complex. This construction enables an automated search for fault tolerance schemes, allowing us to identify 627 examples within a moderate search time. We implement this using the open-source software tool \texttt{Gavrog} and present threshold results for a variety of schemes, finding fusion networks with higher erasure and Pauli thresholds than those existing in the literature. We then define more general structures we call \emph{fault-tolerant complexes} that provide a homological description of fault tolerance from a large family of low-level error models, which include circuit-based computation, floquet-based computation, and FBQC with multi-qubit measurements. This extends the applicability of homological descriptions of fault tolerance, and enables the generation of many new schemes which have not been previously identified. We also define families of fault-tolerant complexes for color codes and 3d single-shot subsystem codes, which enables similar constructive methods, and we present several new examples of each.  
\end{abstract}

\maketitle

\section{Introduction}

Useful quantum computing demands quantum error correction to enable quantum algorithms to be executed without being overcome by noise. Fusion-based quantum computing (FBQC) provides a natural model for fault tolerance in photonic architectures~\cite{bartolucci2023fusion}. In FBQC, fault tolerance is achieved by constructing fusion networks made up of constant-sized resource states and performing entangling projective fusion measurements~\cite{browne2005resource} on qubits from different resource states.
However, the additional structure of checks and logical membranes needed to guarantee topological fault-tolerance is not explicit in the fusion network, which makes the task of defining new schemes laborious. To date there only existed a handful of examples of schemes constructed in an ad hoc way~\cite{duan2005efficient,gimeno2015three,fukui2018high,pant2019percolation,bartolucci2021creation,sahay2022tailoring,paesani2022high,lee2023parity,pankovich2023high}. In other computational models, surface codes are known to be defined for a family of geometries, for example, 2d surface codes exist for every 2-dimensional cell complex~\cite{kitaev2006anyons}, and in measurement-based fault tolerance, a fault-tolerant cluster state can be defined for any 3-dimensional cell complex~\cite{raussendorf2007topological, nickerson2018measurement,newman2020generating}. These generalizations have allowed for the exploration of many surface geometries~\cite{fujii2012error,hyperbolic_codes,chamberland2020topological}. How, then, should one look to define the family of possible surface-code fusion networks? What resource states and measurements can be combined to produce the necessary structure of correlations? 

We find that a large family of fault-tolerant fusion networks can be defined through a simple construction, which we call \emph{fusion complexes}. A fusion complex is 3-dimensional cell complex in which the vertices correspond to resource states and the edges correspond to fusions. We will show that if the cell complex has 4 faces incident at every edge in the bulk then it defines a surface-code fault tolerance scheme. This construction simplifies the definition of FBQC fault tolerance schemes, capturing the measurements, the structure of resource states, as well as the check structure in a single geometric picture. Here by \emph{checks} (which are defined more carefully later), we refer to products of measurement outcomes that give deterministic values in the absence of errors (and are also known as `detectors' in some parts of the literature). Fusion complexes are introduced in Section~\ref{sec:fusion_complexes}.
The most immediately practical application of the fusion complex construction is that it enables an automated search for new fusion network protocols, which we cover in Section~\ref{sec:automated_search}. Building on the techniques introduced in~\cite{newman2020generating}, we automate the search for fusion complexes and identify 627 examples within a moderate search time. The size and complexity of the search space is parametrically controlled, and it can be broadened to find further examples as desired. 

We next show how the fusion complex construction can be generalized to apply to other computational paradigms in what we call~\emph{fault-tolerant complexes}, where the resource states of FBQC are replaced by other operational primitives, such as those relevant for circuit based quantum computing (CBQC)~\cite{raussendorf2007topological,wang2011surface,Dennis_2002,fowler2012surface,corcoles2015demonstration,google2023suppressing,chamberland2022building,bourassa2021blueprint,mcewen2023relaxing,geher2023tangling}, measurement-based quantum computing (MBQC)~\cite{raussendorf2005long, raussendorf2007topological, raussendorf2006fault,foliated_codes,nickerson2018measurement,brown2020universal,newman2020generating}, floquet-based quantum computing (FloBQC)~\cite{hastings2021dynamically, paetznick2023performance, haah2022boundaries, gidney2022pair, kesselring2022anyon, davydova2023floquet, ZXhappyfamilies, paesani2022high, aasen2022adiabatic, townsend2023floquetifying, dua2023engineering}, or generalized FBQC where multi-qubit fusion measurements are used. We discuss this in Section~\ref{sec:homology}. 

It has previously been shown that under some error models it is possible to represent all the fundamental objects---qubits, checks, errors, logical operators---through a single geometrical object: a 3-dimensional cell complex~\cite{raussendorf2005long, raussendorf2007topological, raussendorf2006fault}, for example in MBQC with single qubit measurement errors, where the homological approach for fault tolerance was first proposed~\cite{raussendorf2005long}.
These relationships provide a very useful tool for analysis, simulation, as well as generation of new models.
However, when errors are accounted for in primitive operations, such as FBQC or circuit-level error models, existing methods have not been able to relate all the check, logical and error structure to a single geometrical representation. Instead error models are treated using correlations~\cite{raussendorf2006fault,chamberland2020topological}.  

By contrast, the fault-tolerant complex construction allows us to extend homological representations to a much wider class of error models, including (in many cases) circuit-level error models. This extended description provides a single geometric construction containing all the check structure information.
We find there is a fundamental relationship between the size and structure of the primitives and the structure of the check operators. The fault-tolerant complex construction can also enable generative search for protocols from low-level operations.

Fault-tolerant cell complexes have a wider applicability beyond surface codes. We show in the appendices how similar definitions can be found for color codes and subsystem codes (specifically 3d subsystem toric codes), again enabling generative search for new codes and fault-tolerant protocols.




\begin{figure}
    \centering
    \includegraphics[width=\columnwidth]{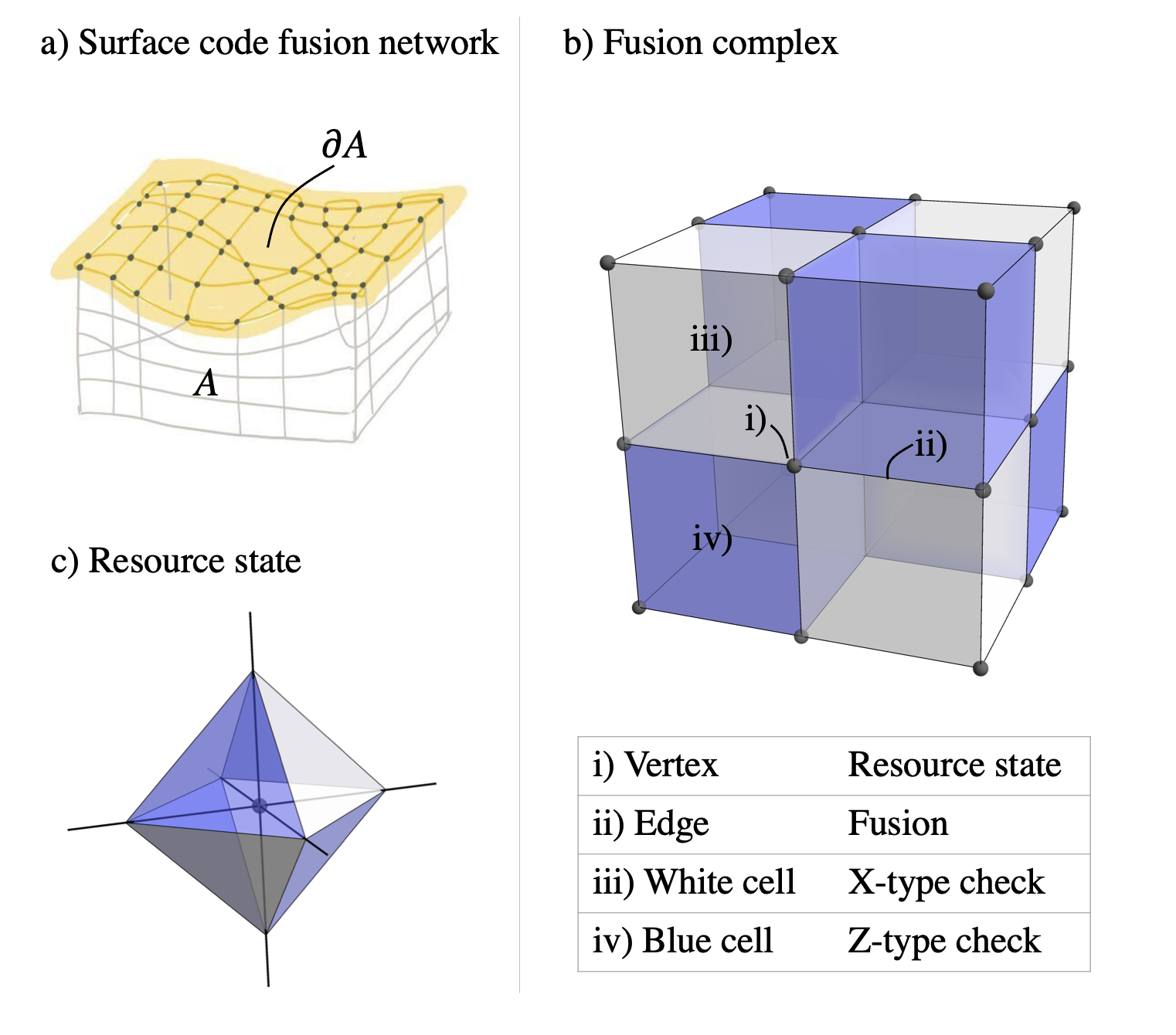}
    \caption{a) A region, $A$, of a fusion network is shown with a boundary $\partial A$. When the fault tolerance protocol inside $A$ has been performed a surface code state is left on the remaining boundary qubits. b) An example, the cubic fusion complex. c) A representation of the resource state associated with a vertex as an inflated volume. The state is represented as a surface code (in the plaquette representation) on the surface of an octahedron.  }
    \label{fig:fusion_complex}
\end{figure}

\section{Fusion Complexes}
\label{sec:fusion_complexes}

\subsection{Surface code fusion networks}
In fusion-based quantum computing~\cite{bartolucci2023fusion} fault tolerance is achieved by building a \emph{fusion network} which defines a set of measurements (\emph{fusions}) made on a collection of constant-sized entangled states (\emph{resource states}). The resource states are stabilizer states, described by a stabilizer group, $\resourcegroup$, and the fusion measurements by the fusion group, $\fusiongroup$, as described in Ref~\cite{bartolucci2023fusion}. When all fusions are 2-way measurements a fusion network can be conveniently represented as a graph where the vertices represent resource states and the edges represent fusion measurements between them. Two examples of fault-tolerant fusion networks were shown in~\cite{bartolucci2023fusion}. An analysis of these schemes shows that they correspond to fault-tolerant fusion networks encoding surface codes (which is made more precise through the following two definitions). But what other schemes are possible for fault-tolerant fusion networks? We are motivated to seek a family of such schemes. 

First let us introduce the notion of a \emph{surface code fusion network} to describe the general properties we are seeking. To do so, we must first define what we mean by a surface code. Note that from here on, the term \textit{surface code} refers to 2d surface code.

\begin{definition}\label{def:surface_code}(Surface code)
  A stabilizer code $\mathcal{S}$ is a surface code, if there exists a 2d cell complex $\complex = \{F, E, V\}$ where vertices are 4-valent and faces are 2-colorable, such that $\mathcal{S} = \langle X_{f_a}, Z_{f_b} ~|~ \forall f_a \in F_a, f_b \in F_b \rangle$, where $X_{f_a}$ is a product of Pauli-$X$ operators on all vertices belonging to the face $f_a$, $Z_{f_b}$ is a product of Pauli-$Z$ operators on all vertices belonging to the face $f_b$, and $F_a$ and $F_b$ denote the set of faces of each color. 
\end{definition}

We refer to this presentation of the surface code as the \textit{plaquette} version of the surface code~\cite{wen2003quantum}. We note that the 4-valence condition implies a local bi-colorability of the faces, and for certain complexes, also a global bi-colorability~\cite{anderson2013homological}\footnote{Although Def.~\ref{def:surface_code} requires bi-colorability of the faces, some twisted global boundary conditions may remove global bi-colorability whilst still leaving a locally valid surface code.}. See Def.~\ref{def:kitaev_surface_code} in Appendix~\ref{app:surface_codes} for an equivalent, but perhaps more familiar definition of a surface code due to Kitaev~\cite{kitaev2003fault}. Errors in a surface code fusion network can be represented by a \emph{syndrome graph}\footnote{\emph{Syndrome graphs} represent the relationship between errors and the checks they flip. Checks are represented by vertices, and errors are represented by edges, connecting two vertices whenever they flip the corresponding checks.}.
We consider any stabilizer code that is equivalent to a surface code under local Clifford operations to also be a surface code (such as the Wen plaquette model~\cite{wen2003quantum}).

To define a surface code-fusion network, we restrict ourselves to the case where all fusions are 2-qubit Bell-basis measurements\footnote{Note that FBQC can also make use of multi-qubit projective measurements, and the more general picture introduced in Section~\ref{sec:homology} can be used to construct these protocols.} $f = \{XX, ZZ\}$. This allows the fusion network to be represented by a graph $\fusenet = (\fusenetV, \fusenetE)$, whose vertices $\fusenetV$ correspond to resource states and edges $\fusenetE$ to fusions. Note in particular that each edge supports two qubits, one from each neighbouring resource state. 

\begin{definition}(Surface-code fusion network)
A fusion network is a \textit{surface-code fusion network} if after performing fusions on any region $\region\subset \fusenetV$, the unmeasured qubits on the boundary of the region  $\partial \region$ are described by a surface code (up to signs of the stabilizers). 
\end{definition}

Here, performing fusions on a region $\region\subset \fusenetV$ means performing all fusions between resource states contained within that region. The boundary $\partial \region$ refers to all un-fused qubits of resource states within $\region$. In particular, this implies all resource states must be surface codes themselves (corresponding to the region of a single vertex). 

An important property of surface-code fusion networks, is that un-fused qubits in the network at any moment in time (a time-slice) during a quantum computation will always be in a surface-code state (up to signs of the stabilizers). An illustration of this is shown in Fig.~\ref{fig:fusion_complex}a).
The surface code, however, need not be the same one from one layer to the next. The slices could also pass through the network in a spatial dimension or around a closed surface, but regardless of the orientation, if we inspect the state of the qubits along that slice they should be in a surface code state.

\subsection{Definition of a fusion complex}

We can now introduce fusion complexes which allow us to define a large family of surface code fusion networks along with a complete description of their check structures. We continue to consider the case where all fusions are 2-qubit Bell-basis measurements. A fusion complex, is a 3-dimensional cell complex with the constraint that every edge has exactly four incident faces.

\begin{definition}\label{def:fusion_complex} (Surface-code fusion complex)
  A 3-dimensional cell complex, $\complexF=\{C,F,E,V\}$, is a surface-code fusion complex if every edge has exactly four incident faces. 
\end{definition}

\begin{theorem}\label{thm:fusion-complex}
Every surface-code fusion complex defines a surface-code fusion network.
\end{theorem}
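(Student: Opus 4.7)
The plan is to construct, from any surface-code fusion complex $\complexF=\{C,F,E,V\}$, an explicit fusion network and then verify the boundary surface-code property for every region. The natural assignment places a resource state at each vertex $v \in V$ and a two-qubit Bell-basis $\{XX,ZZ\}$ fusion on each edge $e \in E$. Each edge $e$ contributes one qubit to each of its two endpoint resource states, and the fusion on $e$ acts on this pair. Geometrically, the state at $v$ is supported on the link of $v$ in $\complexF$---a cell decomposition of a 2-sphere whose $0$-, $1$-, and $2$-cells are respectively the edges, faces, and $3$-cells of $\complexF$ incident to $v$.

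First I would show that each resource state is itself a surface code in the sense of Def.~\ref{def:surface_code}. The valence of a $0$-cell of the link at $v$ equals the number of faces of $\complexF$ containing the corresponding edge, which is exactly $4$ by hypothesis. A $4$-valent graph on the sphere is Eulerian, so its dual is bipartite, and the $2$-cells of the link admit a $2$-coloring in which adjacent $2$-cells receive opposite colors. Assigning $X$- and $Z$-type plaquette stabilizers to the two colour classes of $3$-cells incident to $v$ then realizes the state at $v$ as a surface code. Around each edge of $\complexF$ the four incident $3$-cells form a $4$-cycle in the face-adjacency graph, so local $2$-colourability is automatic; any global obstruction to a coherent colouring of $3$-cells is absorbed into the ``up to signs of the stabilizers'' clause of the definition of a surface-code fusion network.

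For the main claim I would proceed by induction on $|A|$. The base case $|A|=1$ is the statement just established, with the boundary 2-complex identified as the link of the single vertex. For the inductive step, suppose the claim holds for some region $A$ with surface-code boundary $\partial A$, and consider $A' = A \cup \{v\}$ for $v \in V\setminus A$ adjacent to $A$. Adding $v$ turns every edge of $\complexF$ between $v$ and $A$ into an internal edge; the fusions along these edges perform Bell-basis measurements that glue the resource-state surface code at $v$ onto the existing surface-code state on $\partial A$ along the shared qubits. Each $XX$ and $ZZ$ outcome equates the corresponding $X$- and $Z$-stabilizers on the two sides (up to a sign set by the outcome), the glued qubits are consumed, and the resulting stabilizer group on $\partial A'$ is generated by telescoped products of plaquettes from the two surface codes. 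The induced cell complex on $\partial A'$ is the one inherited from $\complexF$: its $0$-cells are edges of $\complexF$ from $A'$ to its complement, and its $1$- and $2$-cells come from faces and $3$-cells of $\complexF$ that straddle the new boundary. Four-valence at every vertex of $\partial A'$ is immediate from the defining property of $\complexF$, and face $2$-colourability descends from the $2$-colouring of $3$-cells, so Def.~\ref{def:surface_code} is satisfied.

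The main obstacle is the gluing step: verifying that the product of resource-state stabilizers across $A'$, together with the measured fusion outcomes, telescopes cleanly into exactly the plaquette stabilizers of the induced 2-complex on $\partial A'$, with no spurious generators. A careful accounting of cancellations along each internal edge of $\complexF$, and of how sign assignments from measurement outcomes and from any obstruction to global $2$-colourability propagate under the inductive gluing, will form the technical core of the argument. The local $2$-colourability supplied by the $4$-valence condition is precisely what guarantees these cancellations occur in matched $XX$/$ZZ$ pairs around every edge.
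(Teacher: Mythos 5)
Your argument is correct in outline but takes a genuinely different route from the paper's. You prove the statement by induction on the region, treating each fusion as a lattice-surgery-style gluing of the resource-state surface code at $v$ onto the code on $\partial\region$, with the boundary code built up as telescoped products of plaquettes. The paper instead argues directly and globally through the check group $\checkgroup=\resourcegroup\cap\fusiongroup$: each $3$-cell yields an operator that is simultaneously a product of resource-state plaquettes and a product of fusion outcomes, so for any region $\region$ the post-fusion stabilizer on $\partial\region$ is obtained at once by restricting these check operators to the unfused qubits; the surface-code structure of Def.~\ref{def:surface_code} then follows because every boundary edge lies in exactly four $3$-cells, two of each color, giving $4$-valence and face bi-colorability of the boundary $2$-complex. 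The step you flag as the ``technical core''---verifying that the telescoping along internal edges cancels cleanly with no spurious generators---is real work in your formulation but is precisely what the check-group formulation short-circuits: since $r_c\in\resourcegroup$ equals $f_c\in\fusiongroup$ by construction, multiplying the restriction of $r_c$ to $\region$ by the measured outcomes inside $\region$ immediately produces a boundary-supported stabilizer, with all edge-by-edge cancellations handled wholesale. Your induction buys a more operational picture of how the boundary code is assembled (and your Eulerian-graph derivation of the link's face $2$-coloring is a nice alternative to the paper's appeal to cell bi-colorability), but as written the proposal defers its hardest step, whereas the paper's choice of invariant makes that step unnecessary. Two minor points to tidy if you pursue your route: regions need not be connected, so the inductive step should also cover adjoining a vertex with no neighbors in $\region$; and the link of a vertex being a topological sphere is an implicit assumption on the complex rather than a consequence of the four-incident-faces condition alone.
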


Before proving this statement, we first describe how the elements of the (surface-code) fusion complex correspond to the objects of a fusion network. Each vertex, $v\in V$ corresponds to a resource state containing one qubit for each edge incident to $v$. Each edge, $e \in E$, corresponds to a 2-qubit fusion measurement between the two resource states associated with the vertices at the endpoints of the edge. Consequently the 1-skeleton of the fusion complex is a graph $\fusenet$ describing the fusion network. An example of a cubical fusion complex is shown in Fig.~\ref{fig:fusion_complex}b). This example is the 6-ring scheme introduced in~\cite{bartolucci2023fusion}, up to local modification of resource-state qubits by Hadamard operations. The fusion network can be thought of as a set of resource states and fusions in 3d space-time, and the fusions can be performed in any order.

The fusion complex contains additional information in that each cell, $c\in \checkgroup$, corresponds to a check operator. Here, in the FBQC setting, a check operator is a Pauli operator that is a member of both the resource group, $\resourcegroup$, and the group generated by all the fusion measurements, $\fusiongroup$. That is, check operators are members of $\resourcegroup \cap \fusiongroup$. Note that here and throughout, we use the term \textit{cell} to refer to a 3-dimensional cell.

Here we make an important note on conventions for naming and how fault tolerance schemes are represented. 
Firstly, we have the convention that all fusion measurements are Bell state measurements of $XX$ and $ZZ$, and consequently that all check operators made up entirely of either $XX$ or $ZZ$ measurements. Other variations can be found by applying Hadamards (or other local Cliffords) in the protocol which can produce checks of mixed type (as is the case in~\cite{bartolucci2023fusion, bombin2023logical}). Secondly, we use the terminology `X-type' and `Z-type' to refer to the different flavors of checks, or simply `X checks' and `Z checks'. Note that we are intentionally avoiding the use of the commonly used language of `primal' and `dual'. This is because the X-type and Z-type check structures in fault-tolerant protocols are not, in general, dual to one another and one of our goals here is to clarify their relationship. When `primal' and `dual' are used it will be instead to refer to the relationship between two cell complexes. 

Returning now to the fusion complex; by construction (Def.~\ref{def:fusion_complex}) the cells are always bi-colorable\footnote{Strictly speaking, the definition only implies that 3-cells are locally bi-colorable since four 3-cells always meet at an edge. We assume throughout that the complex also has a global bi-coloring, even though this is not strictly necessary, as any obstructions from extending the local bi-coloring to a global bi-coloring (i.e., any region that breaks bi-colorability) can be treated as a domain wall. See Appendix~\ref{app:features}.}. The two colors of cells correspond to the X-type and Z-type check operators. For each X cell (Z cell) a check exists which takes the product of the $XX$ ($ZZ$) outcomes from the fusion on each edge contained in the cell. As each edge has exactly four incident faces, there are also exactly four cells meeting at each edge. Thus each fusion outcome (either $XX$ or $ZZ$) is supported in two check operators.

The resource state at any given vertex, $v$, is a stabilizer state defined by the structure of the surrounding cells and edges. The resource state can always be represented as a surface code on the surface of a topological sphere. An example for the cubical cell complex is shown in Fig.~\ref{fig:fusion_complex}c). In this example the resource state is defined on the surface of an octahedron. A simple way to visualize the resource state is by `inflating' the vertex in the fusion complex to create a volume. This volume has qubits on its vertices and bi-colorable faces that define the resource state stabilizers. Where the inflated volume intersects an X-type (Z-type) cell the resource state has an X-type (Z-type) plaquette for its face. Since the code is on the surface of a sphere this representation of the stabilizers gives an overcomplete-generating set. If there are $n$ qubits in the resource state then there are $n+2$ plaquette stabilizers on the sphere (with one X-type and one Z-type plaquette being a product of the others). 

A check operator exists for each cell of the fusion complex. In FBQC the check group is defined as $\checkgroup =\resourcegroup \cap \fusiongroup$, where $\resourcegroup$ is the stabilizer group of all resource states, and $\fusiongroup$ is the fusion group. A check exists wherever there is an element of $\resourcegroup$ that exists in $\fusiongroup$. We can verify that for each cell, $c$, of the complex there exists an element of the resource group, $r_c \in \resourcegroup$, that is identical to an element of the fusion group, $f_c \in \fusiongroup$. If, for example, the cell is X-type then $r_c$ is the product of the X-type stabilizer generator from each vertex of the cell, and $f_c$ is the product of the XX fusion measurement operator for each edge of the cell (and similarly for the Z-type checks).

Since there are always four cells meeting at an edge each fusion is part of exactly four checks, and since the cells are bi-colorable there must always be two X-checks and two Z-checks. An error flipping a fusion measurement outcome will therefore result in exactly two flipped checks, the property of a surface code under a single qubit Pauli-$X$ or $Z$ error. 

\begin{proof}[Proof of Theorem \ref{thm:fusion-complex}]
To prove Theorem~\ref{thm:fusion-complex} one can use the structure of the checks. For any region $\region\subset \fusenetV$, the state on the qubits $\partial \region$ after performing fusions within $\region$ can be obtained (up to sign) by restricting the check operators to $\partial \region$. This is a surface code state, as every qubit belongs to two X-check and two Z-check generators and therefore meets the conditions of Def.~\ref{def:surface_code}, following from the fact that every edge of the fusion complex belongs to two X-check and two Z-check generators. 
\end{proof}

We will go onto explore the surface code properties of this construction more carefully in Section~\ref{sec:homology}.

\subsection{A family of fault tolerance protocols}
In summary, fusion complexes are defined by the simple constraint that each edge of the complex has four incident faces, and this gives a large family of surface-code fusion networks. 
It is worth noting that sometimes the interpretation as a fusion complex may not be apparent at first glance---for example, GHZ states (which can be used as resources for FBQC) can be interpreted as surface codes on a sphere. In Section~\ref{sec:automated_search}, we present an automated search method for fusion complexes. There are also constructive methods for generating fusion complexes which we give in Appendix~\ref{app:construction}. Beyond FBQC, we present fault-tolerant complexes in Section~\ref{sec:homology} which describe protocols for more general computational models. Furthermore, fusion complexes can be used to define 3d subsystem toric codes, provided they also have bi-colorable vertices, which we show in Appendix~\ref{app:subsystem_code}. Color code versions of the fusion complex also exist and are explored in Appendix~\ref{app:color_code}.




\section{Automated search with Gavrog and examples}
\label{sec:automated_search}

The most immediately practical application of the definition of fusion complexes (Def.~\ref{def:fusion_complex}) is to automate the generation of examples. This enables high-throughput automated design and simulation of FBQC protocols.

\subsection{Automated search method}
In Ref.~\cite{newman2020generating} Newman \textit{et al.}\  showed that computational tiling theory could be used to generate 3-dimensional cell complexes that correspond to fault-tolerant cluster states for MBQC. Here we show that a similar method can be used to generate fusion complexes. A crucial difference is that in the MBQC setting the cell complex defines a very large `pre-entangled' state. In FBQC we are also accounting for the additional step of the construction of entanglement from small resource states. To achieve this, cell complexes must be filtered to find those that satisfy the constraints in Def.~\ref{def:fusion_complex}.

To systematically generate fusion complexes, we first generate candidate 3d cell complexes with the open-source software \texttt{Gavrog}~\cite{gavrog} before filtering them with Def.~\ref{def:fusion_complex}. In practice, it is easier to check the definition in the \textit{dual picture}, where cells are replaced with vertices, faces with edges, and so on, as described in Section~\ref{sec:homology}. The surface-code fusion complex definition requires that all faces in the dual cell complex have four edges (or equivalently, four vertices). This is checked by first looping over all cells in a unit cell of the dual complex, then looping over all faces of the cell. We only keep the candidate cell complex if all dual faces have four vertices.

In \texttt{Gavrog}, cell complexes are denoted by their D-symbols, which are generalized from Delaney-Dress symbols, also called the extended Schl\"afli symbols in Ref.~\cite{newman2020generating}. A D-symbol fully specifies a periodic tiling by its reflection symmetries, see Ref.~\cite{gavrog, Delgado-Friedrichs2001, Delgado-Friedrichs2003} for a detailed description. For example, the D-symbol for the cubic cell complex is \texttt{<1 3:1,1,1,1:4,3,4>}.

A subtlety in the implementation of \texttt{Gavrog} is that it excludes 3d cell complexes with cells that only contain two faces. We refer to these as `ravioli-cells' because of their resemblance to the pasta variety where two identical faces are joined at their boundaries to create a volume. In fusion complexes there is no constraint that prevents these cells, and indeed they can result in good fault tolerance properties since they often represent low weight checks. We allow such degenerate cells in our systematic search for fusion complex candidates.

Cell-complex generation in \texttt{Gavrog} is bounded by the maximum size of the D-symbol. The size describes the number of chamber classes in the Barycentric subdivision of the 3d tiling and is denoted by a positive integer \cite{Delgado-Friedrichs2003}. For example, the size of D-symbol is 1 for the cubic complex. We found 53 fusion complexes with sizes smaller or equal to 8, which are listed in Table~\ref{tab:examples_longlist}, and in total 627 fusion complexes with sizes smaller or equal to 15. More examples can be found by increasing the max D-symbol size in the search. 

\subsection{Notable examples}

Table~\ref{tab:fusion_complexes} contains some notable examples of fusion complexes with their D-symbols, syndrome-graph check degrees (or number of edges in cells), and resource-state sizes (or number of faces in dual cells). These examples are depicted in Fig.~\ref{fig:fusion_complex_examples}, and their resource states are shown in Fig.~\ref{fig:resource_states}.

\begin{figure}
    \centering
    \includegraphics[width=\columnwidth]{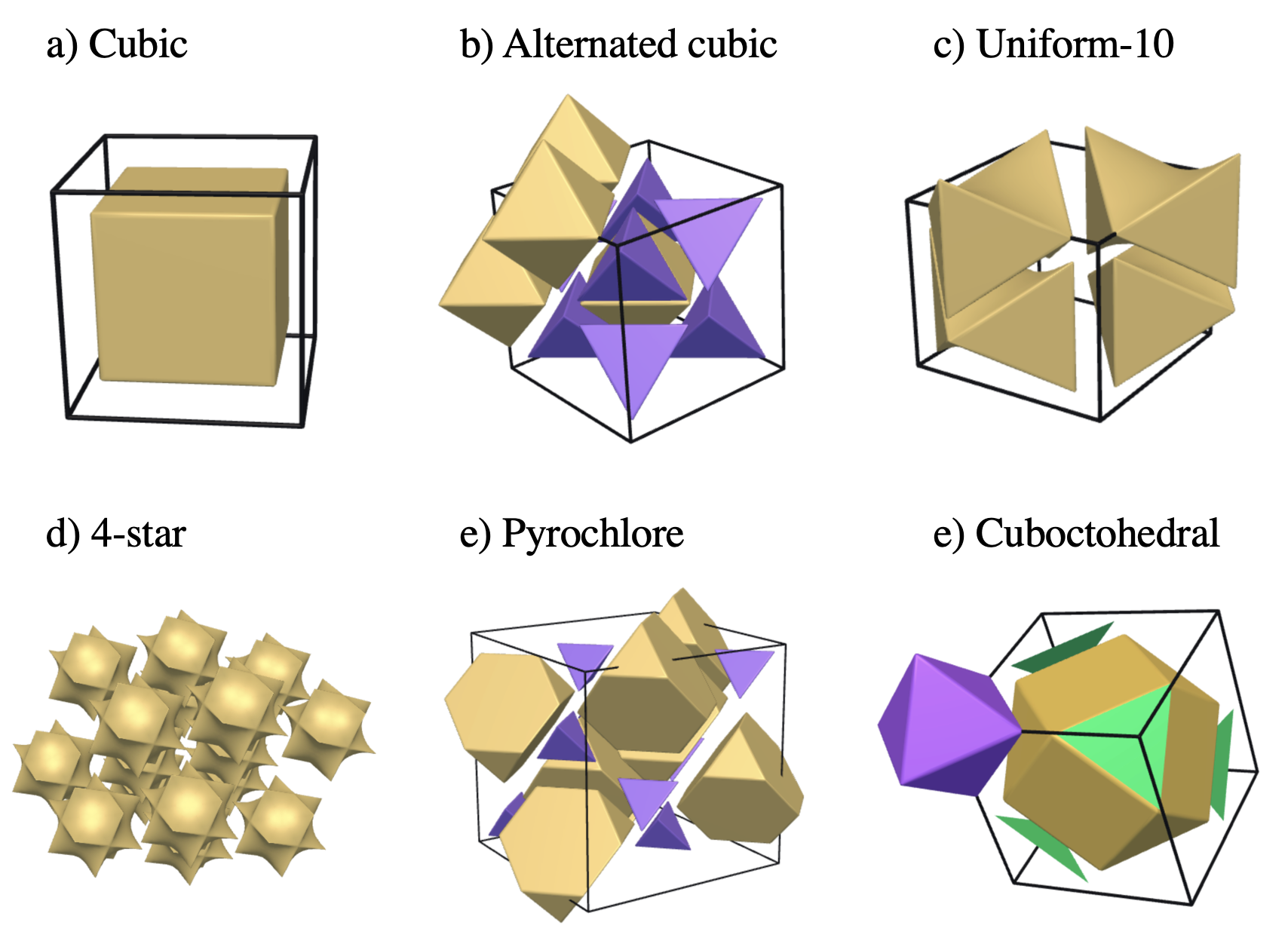}
    \caption{Graphical depictions of the fusion complexes listed in Table~\ref{tab:fusion_complexes}. Vertices correspond to resource states, edges to fusions, and cells to check operators. Note that the cells are colored for visual aid, and the color is not reflective of the cell being $X$ or $Z$ type. Images produced using \texttt{webGavrog}~\cite{webgavrog}.}  
    \label{fig:fusion_complex_examples}
\end{figure}

\begin{figure}
    \centering
    \includegraphics[width=\columnwidth]{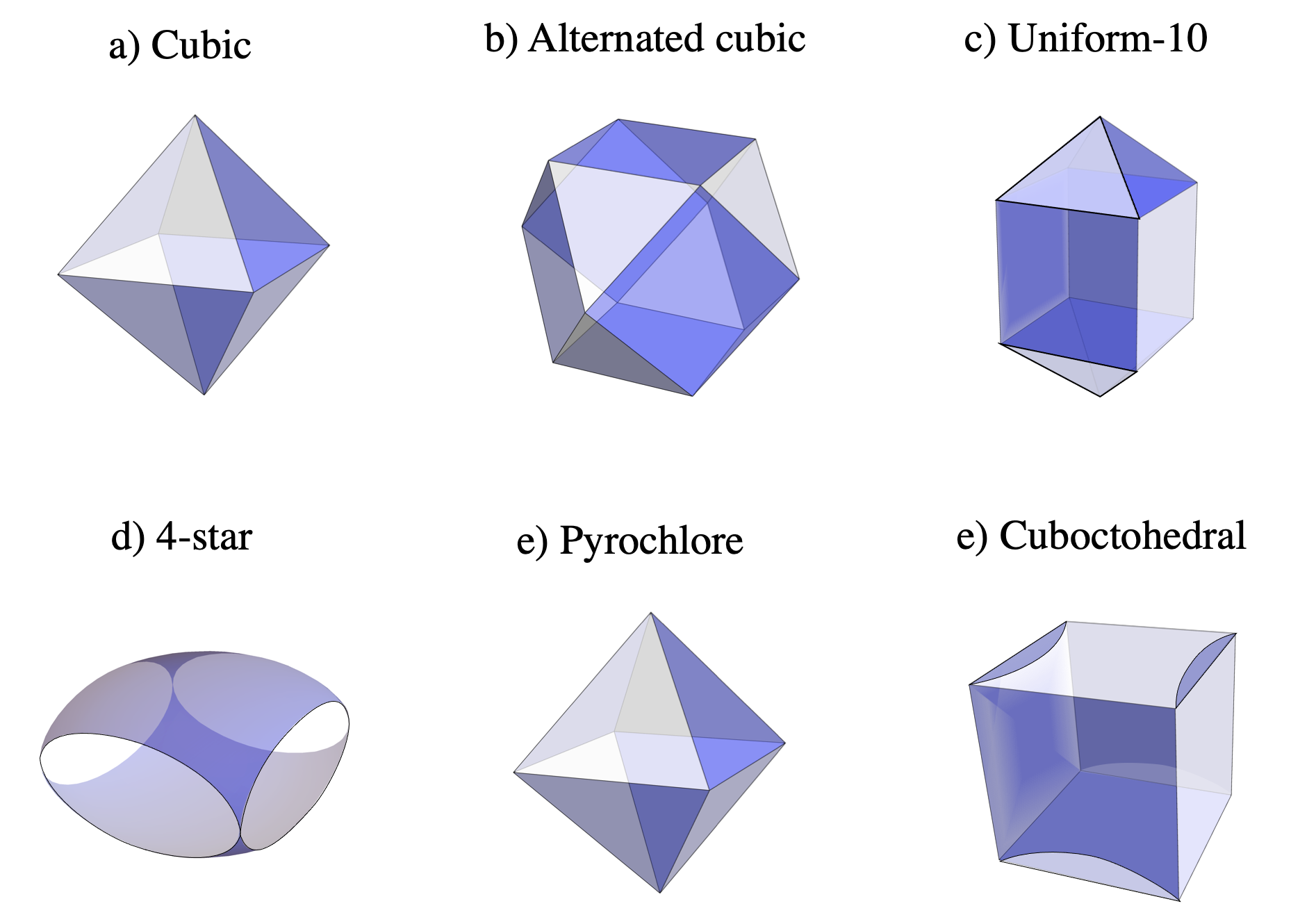}
    \caption{Graphical depictions of the resource states for each fusion complex listed in Table~\ref{tab:fusion_complexes}. Each resource state is depicted as a surface code on a closed surface in the plaquette representation. Each vertex represents a qubit and each face of the polyhedra represents a stabilizer of the resource state. There are two face colors to represent the X-type and Z-type stabilizers.  }
    \label{fig:resource_states}
\end{figure}

\begin{figure}
    \centering
    \includegraphics[width=\columnwidth]{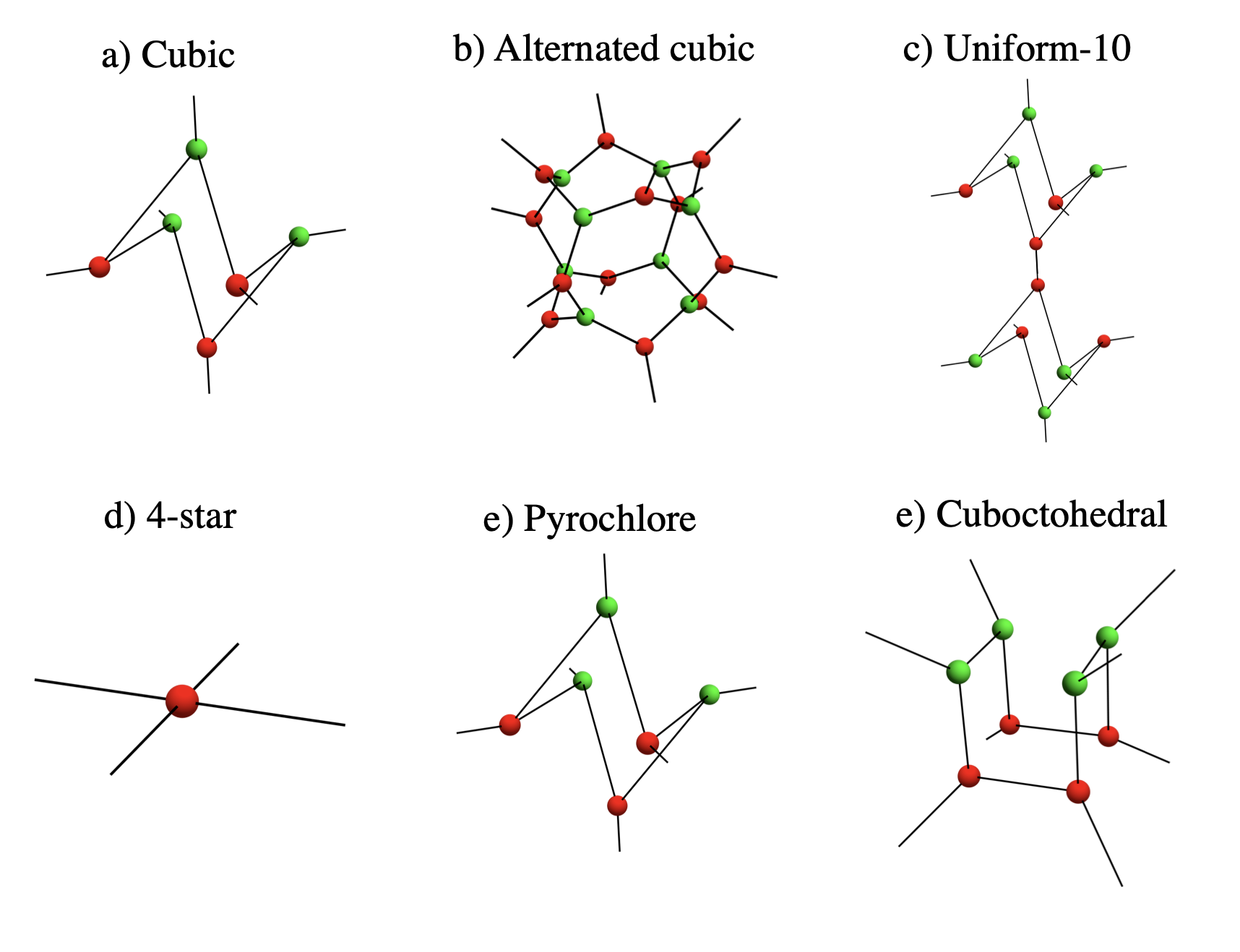}
    \caption{ZX-diagram representations of the resource states for each fusion complex listed in Table~\ref{tab:fusion_complexes}. We use the standard ZX-diagram notation here, for which Refs.~\cite{coecke2011interacting,coecke2011interacting,coecke2018picturing} provide thorough introductions, and Ref.~\cite{ZXhappyfamilies} covers in the setting of fault tolerance. Quantum operations are represented as a network of green ($Z$-type) and red ($X$-type) spiders. Spiders can be thought of as describing stabilizer-state projections, with each n-port spider describing n stabilizer generators on n qubits. The stabilizer generators described by $Z$ ($X$) spiders are $X^{\otimes n}$ ($Z^{\otimes n}$) and all pairwise $Z^{\otimes 2}$ ($X^{\otimes 2}$) operators. Multiple spiders can be connected to form composite ZX diagrams. Each connection between two spiders corresponds to a bell state projection, i.e., the two qubits $i$ and $j$ corresponding to the two ports are identified via the projection $Z_i Z_j = X_i X_j =+1$. Here our ZX diagrams represent states, with the outer legs representing the qubits of the state. ZX diagrams are not unique, there are many possible ZX diagrams that correspond to each resource state, here we show one illustrative example for each case. Note that the 4-star network is made up of both green and red spiders, but only one here is shown for simplicity. }
    \label{fig:ZX_resources}
\end{figure}

\textbf{Cubic.} The cubic fusion complex has constant syndrome-graph check degree 12 and resource-state size 6, and corresponds to the 6-ring fusion network introduced in~\cite{bartolucci2023fusion}. This is the example most closely related to circuit-based implementations of the standard square surface-code as described in Ref.~\cite{ZXhappyfamilies}. 

\textbf{Alternated cubic.} This complex uses a 12 qubit resource state and unlike the more symmetric cubic complex, it has checks of two different sizes. It is also asymmetric between $X$ and $Z$ checks (and syndrome graphs). 

\textbf{4-star.} This complex has resource states which are all 4-GHZ states, it corresponds to the fusion network introduced in Ref.~\cite{bartolucci2023fusion}. 
    
\textbf{Pyrochlore}. The Pyrochlore fusion complex has the same resource state size and the same average check degree as the 6-ring fusion complex. However there is lower symmetry in the structure of the checks.

\textbf{Cuboctahedral}. All resource states contain 8 qubits. The XX and ZZ syndrome graphs are isomorphic (under periodic boundary conditions), with 2/3 of the checks having degree 3, 1/6 having degree 12, and 1/6 having degree 24. The average check degree is 8. Allowing for degenerate cells is crucial for discovering this fusion complex in the systematic \texttt{Gavrog} search, due to the appearance of `ravioli-cells'.

\begin{table*}[t]
\begin{tabular}{lllllll}
\hline
Name & D-symbol & C & R & $p^\star_\mathrm{erasure}$ & $p^\star_\mathrm{flip}$ \\ \hline
Cubic (Ref.~\cite{bartolucci2023fusion}) & \scalebox{1}[1.0]{\tt\scriptsize<1 3:1,1,1,1:4,3,4>} & $12$ & $6$ & 12\% & 1\% \\ \hline
Alternated-cubic & \scalebox{1}[1.0]{\tt\scriptsize<2 3:1 2,1 2,1 2,2:3 3,3 4,4>} & $2\times6+12$ & $12$ & 12\%, 25\% & 1\%, 2.9\% \\ \hline
Uniform-10 & \scalebox{0.8}[1]{\tt\scriptsize<5 3:1 3 5,2 3 4 5,1 4 5,1 2 3 5:3 4,3 4,4 4>} & $2\times10$ & 10 & 14.5\% & 1.4\% \\ \hline
4-star (Ref.~\cite{bartolucci2023fusion}) & \scalebox{1}[1.0]{\tt\scriptsize<2 3:2,1 2,1 2,2:6,2 4,4>} & $24$ & $3\times4$ & 6.9\% & 0.75\% \\ \hline
Pyrochlore & \scalebox{0.8}[1.0]{\tt\scriptsize<4 3:1 2 3 4,1 2 4,1 3 4,2 4:3 3 6,3 3,4>} & $2\times6+2\times18$ & $4\times6$ & 10\% & 0.78\% \\ \hline
Cuboctahedral & \scalebox{.6}[1.0]{\tt\scriptsize<8 3:1 2 3 4 5 6 7 8,1 2 3 4 5 7 8,2 4 6 8,3 5 8 7:3 3 3 3 3 4 3,2 4 4,4>} & $4\times3+12+24$ & $3\times8$ & 12.7\% & 1.3\% \\ \hline
\end{tabular}
\caption{Examples of surface-code fusion complexes, with their names, D-symbols, syndrome-graph check degrees C, resource-state sizes (number of qubits) R, and the erasure and Pauli thresholds $p^\star_\mathrm{erasure}$ and $p^\star_\mathrm{flip}$. The first two integers in the D-symbol denote the size and the dimension, respectively, where the dimension is always 3 in this work. See Ref.~\cite{gavrog, Delgado-Friedrichs2001, Delgado-Friedrichs2003} for the full description of D-symbols. Here the $n_1\times d_1 + \ldots$ notation means there are $n_1$ checks/resource-states with degree/size $d_1$, etc, in the unit cell. Note the XX and ZZ syndrome graphs for the Alternated-cubic fusion complex have different thresholds. 
}
\label{tab:fusion_complexes}
\end{table*}

\subsection{Threshold simulation} 

Once a fusion complex is identified \texttt{Gavrog} can provide a full cell-complex description in the form of boundary maps and a highly symmetric embedding of its vertices in 3d. With this information, we can specify the resource states, fusions, syndrome graphs, and logical membranes for the fusion complex as prescribed in Section~\ref{sec:homology}. We perform fault-tolerance simulations to compare fusion complexes.
We characterize the fault-tolerance capability of fusion complexes under two error models that capture erasure and Pauli errors.  

\begin{enumerate}
    \item Erasure model, where each fusion measurement outcome is erased with probability $p_\mathrm{erasure}$. We denote the threshold for erasure as $p_\mathrm{erasure}^\star$.
    \item Pauli model, where each fusion measurement outcome suffers a bit-flip error with probability $p_\mathrm{flip}$. We denote the threshold for pure Pauli error as $p_\mathrm{flip}^\star$.
\end{enumerate}   

Table~\ref{tab:fusion_complexes} shows the threshold results for the selected example fusion complexes discussed in the previous section under these two error models. Starting with the syndrome graph, we sample a random error configuration according to the probabilities $p_\mathrm{erasure}$ and $p_\mathrm{flip}$, compute the syndrome and apply decoding to identify a correction and then check for a logical failure. Since the structure of checks in fusion complexes can be represented by syndrome graphs we can apply standard decoders for surface codes, such as Minimum-Weight Perfect Matching (MWPM)~\cite{dennis2002topological,kolmogorov2009blossom,wu2023fusion,higgott2022pymatching} and Union Find (UF)~\cite{union_find}. We use a minimum weight perfect matching decoder and repeat this simulation over many trials and values of the error parameters to estimate the threshold. 

The performance of these schemes using linear-optics can be greatly increased using local encodings~\cite{bartolucci2023fusion,bell2023optimizing,hilaire2021error} and dynamic failure bases~\cite{bombin2023increasing}. Furthermore, these protocols can be used to balance the $X$ and $Z$ thresholds in cases where they would otherwise differ (as for the Alternated cubic complex, for example).

Performing threshold simulations for examples found in the \texttt{Gavrog} search, we find many fusion complexes with thresholds higher than those in Ref.~\cite{bartolucci2023fusion} (some are shown in Table~\ref{tab:fusion_complexes}). Typically, larger thresholds come from fusion complexes with larger resource states, although this relationship is not strictly monotonic. Identifying the most promising candidates requires analysis with more detailed error models of the hardware and system constraints, which is beyond this scope of this paper.




\section{Fault-tolerant complexes}
\label{sec:homology}

Our focus in Sections~\ref{sec:fusion_complexes} and~\ref{sec:automated_search} was on construction of fault-tolerant schemes for FBQC. But the fusion complex construction can be applied much more generally to fault-tolerant protocols made up of finite-sized operations. The vertices of the fusion complex, which in Section~\ref{sec:fusion_complexes} we identified with resource states, can instead by identified with other computational primitives such as the 2-qubit entangling gates relevant in circuit-based computation, non-destructive multi-qubit projective measurements used in floquet-based computation, or multi-qubit destructive measurements in generalized FBQC. We refer to these then as \emph{fault-tolerant complexes}. This picture allows us to extend homological descriptions of fault tolerance to a much wider class of protocols and error models than previously known, including circuit-level noise. It also enables the generative capabilities described in Section~\ref{sec:automated_search} to be applied to other hardware primitives. 

\subsection{Homological descriptions of fault tolerance}

Homological descriptions of surface-code fault tolerance (in space-time) were introduced in~\cite{raussendorf2007topological} in the setting of MBQC, where it was shown how all the fundamental objects of the system---qubits, checks, errors, and logical operators---could be represented through a 3-dimensional cell complex. The X-type and Z-type checks and logical correlators could be related to closed membranes (2d surfaces) in either the primal cell complex, or its dual complex. The checks are supported on (homologically) trivial closed surfaces and logical correlators are supported on (homologically) non-trivial closed surfaces. However a limitation of this model is that the elementary errors (such that the error model is an independent distribution over elementary errors) that can be represented homologically are restricted to single-qubit measurement errors, and imperfections arising from entanglement generation cannot be captured without correlations. Typically these errors are accounted for by computing the propagation of physical errors through circuits and introducing correlations to the error model. Alternatively the syndrome graph of the cubic graph can be modified by adding additional ``diagonal'' edges capturing the effect of correlated errors, but this new syndrome graph is not directly related to the original cell complex.An example of this is capturing the so-called `hook' errors~\cite{Dennis_2002} in syndrome extraction circuits by the addition of a `diagonal edge' to the cubic syndrome graph~\cite{wang2011surface}. The original property of duality between the X- and Z-type syndrome graphs is apparently lost. 

However we will show that starting from a fault-tolerant complex we are able to expand what can be captured as `elementary errors' allowing us to identify homological cell complexes representing the structure of X- and Z-syndrome graphs and logical membranes in the presence of noise during entanglement generation. They are not related by a simple duality, but rather by 6 related cell complexes, all of which can be derived from the fault-tolerant complex. 

For our current purposes it is the dual of the fault-tolerant complex that is most illuminating, we call this the \emph{resource homology complex} or \emph{R-homology complex}, as the cells correspond to the primitive operations (resources).

\subsection{The resource homology complex}

The resource homology complex (or R-homology complex) gives a homological representation of a fault tolerance protocol. We can first redefine the constraints in this picture as follows.

\begin{definition}\label{def:homologycomplex}
  A 3-dimensional cell complex, $\complexR$, is an R-homology complex if all faces are 4-sided.
\end{definition}

Each cell in the complex represents an elemental unit of the protocol (for example, a resource state in FBQC, or a 2-qubit entangling gate in CBQC), with one qubit on each face of the cell. The vertices are bi-colorable and the two colors correspond to X-type and Z-type check operators.

\begin{figure}[t]
    \centering
    \includegraphics[width=0.95\columnwidth]{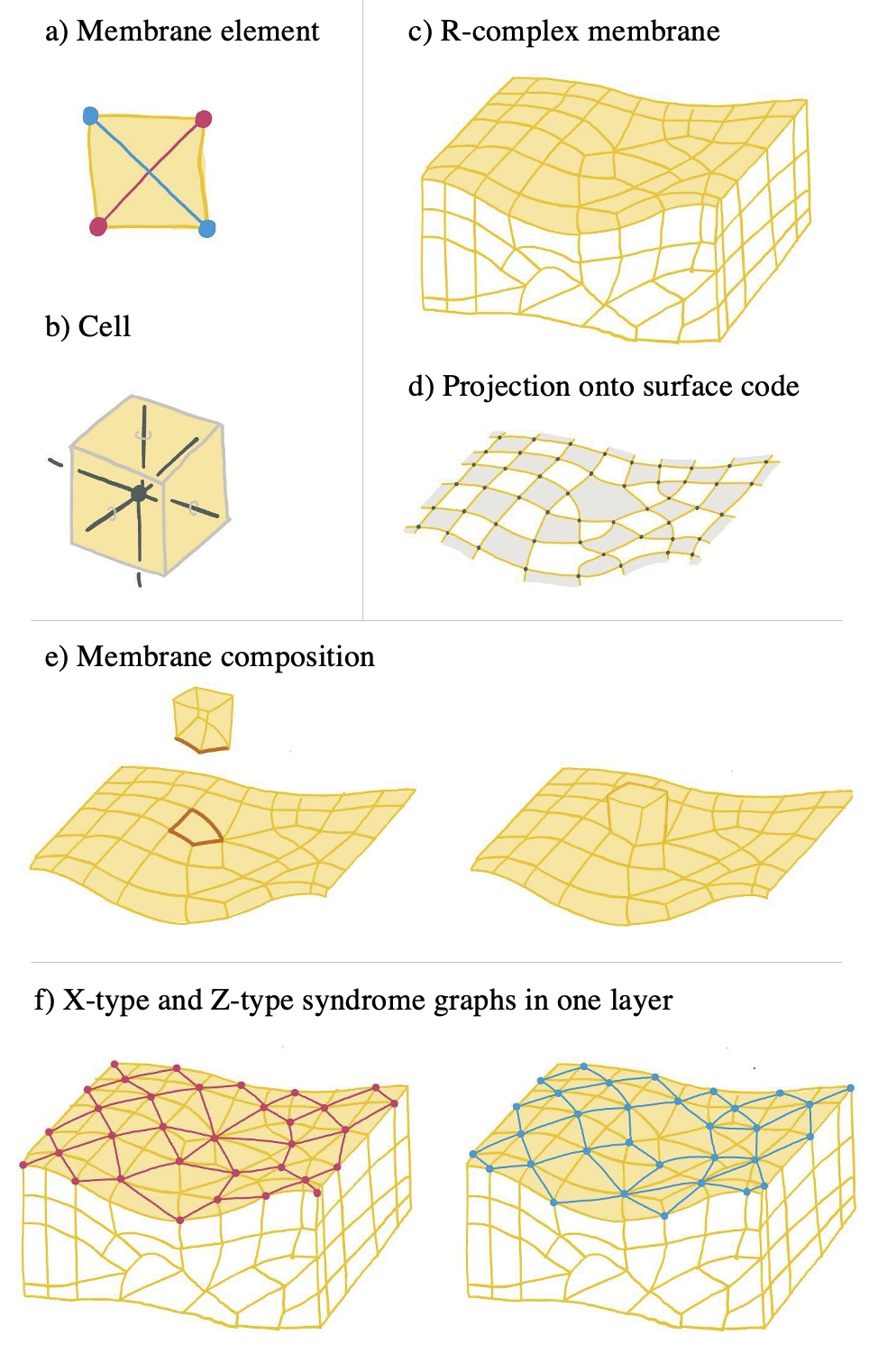}
    \caption{The resource homology complex (dual fusion complex). a) Every face of the complex has 4 edges. The vertices at the boundary are 2-colorable and correspond to check operators. X-type checks are shown as red circles and Z-type checks are shown as blue circles. b) Cells correspond to the elementary units of the fault tolerance protocol. In FBQC these cells are resource states with each face corresponding to a qubit. c) A region, $\region$, of the R-homology complex has a boundary $\partial \region$, which is a membrane made up of faces. When the protocol inside the region $\region$ has been performed what remains is a surface code (up to signs of the stabilizers) on the membrane. d) The dual of the 2-dimensional membrane gives the plaquette representation of the surface code with qubits on vertices and checks on 2-colorable faces. e) Two regions $\region$ and $\region'$ can be composed. $\region'$ represents some portion of a fault tolerance protocol. Once it is performed the new state is given by a surface code one the boundary $\partial (\region \cup \region')$ (up to stabilizer signs). f) Each membrane element contributes one edge to each the X- and Z-type syndrome graph. Within any membrane the X- and Z-type syndrome graphs are 2-dimensional and dual to one another. }
    \label{fig:R_homology_complex}
\end{figure}

\textbf{Membrane elements}.
The 4-sided faces (which we call square faces) of the R-homology complex are membrane elements, with one illustrated in Fig.~\ref{fig:R_homology_complex}a). Four check vertices surround each face, two X-type and two Z-type. Each face represents a distinct location at which an error can occur. A Pauli-X error on the face will cause the Z-checks on the boundary of the face to be flipped, and a Pauli-Z error on the face will cause the X-checks to be flipped. Consequently, each face has two edges of the syndrome graph associated with it, one X-type and one Z-type as indicated in the figure. The error could originate from either cell it is contained in but they both cause the same syndrome. 

\textbf{Membranes}.
The R-homology complex provides a formalization of our earlier description of surface codes existing on the boundaries of fusion networks. If we consider a region, $\region$, for which the fault tolerance protocol has been performed (e.g., the fusions (FBQC) or gates (CBQC) have been completed within $\region$), then the only remaining qubits are on the boundary of the region, $\partial \region$, which forms a membrane supporting a surface code. A membrane is composed of 4-sided faces (membrane elements), and is itself a 2-dimensional cell complex. If we take the 2d dual complex of the membrane surface, we arrive at the plaquette-style representation of a surface code~\cite{wen2003quantum}, where the vertices are qubits and the faces are bi-colorable and represent the X- and Z-stabilizers of the code.

\textbf{Cells}.
The 3-cells of the R-homology complex represent the smallest elements of a fault tolerance protocol (see examples in Sec.~\ref{sec:example_FT_complexes}). The boundary of a cell is a closed membrane, such that we can always interpret the cell boundaries as a surface code on a topological sphere. 
The cells define the elementary error model, determining the errors that can captured without correlations.
Elementary errors can be captured by an error model in which the protocol inside the cell is performed perfectly, but followed by independent errors on its outputs. If we choose cells that are very large the elementary error model is likely unphysical as it would assume a large entangling operation on many qubits that produces uncorrelated errors on the outputs. Any more structure in the errors would need to be accounted for with correlated errors, which is no longer information captured in the geometric structure of the complex. To capture physical level noise we should consider small cells where the noise model of the primitives can be represented with independent errors on their outputs. For example, one can consider contracting or splitting cells along faces to obtain bigger/smaller cells.

\textbf{Composition of membranes}.
Fig.~\ref{fig:R_homology_complex}e) illustrates the composition of membranes. We consider starting with some region, $\region$, of the cell complex, $\complexR$. $\region$ represents the portion of the protocol that has been performed. The boundary of the region $\partial \region$ corresponds to the state of the qubits. Then we perform an additional protocol element associated with a cell, $c$, expanding the region to $\region \cup c$ (e.g., in FBQC, the protocol element corresponds to preparing an additional resource state on $c$ and fusing it along one or more faces). Then the boundary of the new region, $\partial (\region \cup c)$, corresponds to the updated quantum state. For two cells to connect, their faces must have the same orientation such the color of their vertices matches.

\textbf{Errors and syndrome graphs}.
Each face represents a distinct error location, and an error associated with a given face causes two check vertices at the boundary of the face to be flipped. The \emph{syndrome graph} is a representation of the check structure in surface codes where there is a vertex for each check and an edge for each potential error. There is an X-syndrome graph representing X-checks and Z-errors, and similarly a Z-syndrome graph representing Z-checks and X-errors. Each face in $\complexR$ contributes two edges to the syndrome graph, one for each error type, as shown in Fig.~\ref{fig:R_homology_complex}a). Now consider combining these syndrome graph elements across a membrane of the R-homology complex. An example is shown in Fig.~\ref{fig:R_homology_complex}f). The result is the syndrome graph of the surface code supported on the membrane. Consequently, the X- and Z-syndrome graphs within this layer are dual to each other in the sense that they can be specified as the vertices and edges (X) or dual vertices and dual edges (Z) of the same 2d cell complex.
This property of 2-dimensional duality of syndrome graphs is the same for every surface in the R-homology complex. The full syndrome graph is built up by combining the graph elements associated with every face. The final structure is 3-dimensional, but we can now see that the relationship between X- and Z-type components is derived from a 2-dimensional duality which elucidates why when circuit level noise is included in fault tolerance protocols the full X-type and Z-type syndrome graphs cannot be considered as directly dual to one another in 3-dimensions.

\subsection{Examples of fault-tolerant complexes}\label{sec:example_FT_complexes}

The cells of R-homology complexes (vertices of the fault-tolerant complex) represent primitive quantum operations. The ZX formalism~\cite{coecke2011interacting, backens2014zx,coecke2018picturing,ZXhappyfamilies} provides a helpful tool for identifying how these constructions can translate between different types of primitives. In the ZX formalism any quantum operation can be constructed as a ZX-diagram (e.g., those shown in Fig.~\ref{fig:ZX_resources}), and from there it can be restated as a surface-code cell (e.g., those shown in Fig.~\ref{fig:resource_states}). This identifies the structure of the primitive cells of the R-homology complex.

\textbf{Example: FBQC with 2 qubit fusions.} In the FBQC setting where all fusions are 2-qubit Bell basis fusions the cells are identified with resource states and fusions are performed between every pair of neighboring cells. This is the case covered in detail in Section~\ref{sec:fusion_complexes}.

\textbf{Example: FBQC with multi-qubit fusions.} With multi-qubit projective measurements each cell of the R-homology complex is identified with either a resource state or a fusion measurement. The cells must be bi-colorable such that fusion cells only share faces with resource-state cells and vice versa. 

\textbf{Example: CBQC with 2-qubit entangling gates.} The cells have 4 faces representing the two inputs and two outputs of 2-qubit entangling gates. The geometry of the cells will depend on the gates and can be determined from their ZX diagrams. 

\textbf{Example: Floquet-based fault tolerance~\cite{hastings2021dynamically, paetznick2023performance, haah2022boundaries, gidney2022pair, kesselring2022anyon, davydova2023floquet, ZXhappyfamilies, paesani2022high, aasen2022adiabatic, townsend2023floquetifying, dua2023engineering}.} Primitives are multi-qubit projective measurements. For example a 2-qubit $XX$ projection would correspond to a cell with 4 faces corresponding to the two inputs and two outputs of the measurement. This cell would be equivalent to a 4-qubit GHZ-type resource state. 

The fault-tolerant complex construction is limited to describing protocols that behave as surface codes down to their basic operations. There are circuits that implement surface codes at a higher level, but in such a way that at a lower level some other structure is introduced. For example in CBQC it is possible to perform syndrome extraction such that some errors cause 4 checks to be violated (e.g.,~\cite{geher2023tangling}). These cases cannot be fully captured by a surface-code fault-tolerant complex. Rather, we would expect there to be some length scale at which the protocol can be partitioned into a fault-tolerant complex, whilst within those cells the lower level error structure would need to be described with correlations. Gaining a better understanding of circuits that go beyond a surface code structure at a low level is a topic for future study.

\subsection{Homology of checks and logical operators}\label{sec:triplets}

\begin{figure*}
    \centering
    \includegraphics[width=2\columnwidth]{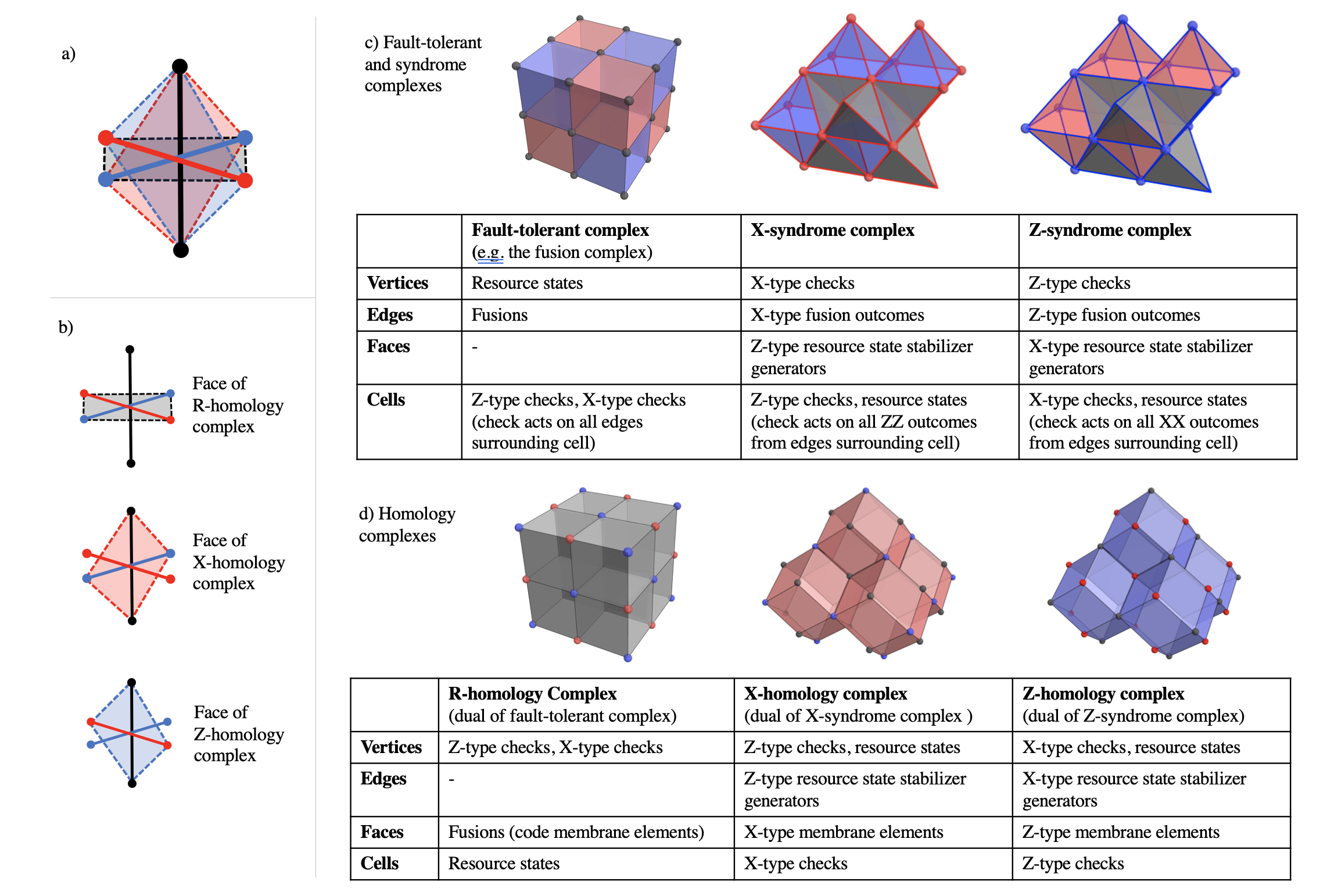}
    \caption{a) Triplet relationship between the R-homology complex (dual fusion complex), X-homology complex and Z-homology complex. Faces from each complex intersect to form an octahedron. b) Face of each homology complex shown separately. Edges of the fusion complex and X- and Z-syndrome complexes are shown. Tables c), d) show the example of the cubic fault-tolerant complex and the syndrome complexes and homology complexes that derive from it.}
    \label{fig:triplet}
\end{figure*}

Having identified the origin of the X- and Z-syndrome graph structure for a fusion complex, we now ask if there is another cell complex for which the faces represent membrane elements of X- (or Z-) logical correlators? In such a complex, cells should represent check operators and membranes would be used to construct logical correlations of the fault-tolerant protocol. We recall that in the case of measurement-based fault tolerance~\cite{raussendorf2005long,raussendorf2007topological} there is a large cluster state defined by a cell complex such that the faces represent membrane elements of X- logical correlator and the faces of the dual cell complex represent membrane elements of the Z- logical correlator. In our present case, where the primitive operations are finite in size, the relationship is not so straightforward.

However we find that we can identify these structures, which we will call the X-homology complex and the Z-homology complex. These two new complexes along with the R-homology complex form a triplet, whose relationship is visualized in Fig.~\ref{fig:triplet}a) and b). Each face in one complex has a corresponding face in each of the other complexes and the three intersect to form an octahedron. Each complex can be derived by taking a different cross-section through the triplet octahedron. This means that the X- and Z-homology complexes are, perhaps surprisingly, subject to the same constraints as the R-homology complex; the faces are squares, and by extension the vertices are bi-colorable.

In the X-homology complex the faces correspond to X-membrane elements. The cells, which are the smallest closed membranes, correspond to X-type check operators. Logical operators are formed from extensive membranes, which are 2d sheets of faces. In the X-homology complex the two colors of vertices correspond to Z-type checks and resource states. Similarly in the Z-homology complex, the faces correspond to Z-membrane elements, the cells are Z-type check operators, and the two colors of vertices correspond to X-type checks and resource states. Notice the similarities and differences here to the construction in the fault-tolerant MBQC scheme of Ref.~\cite{raussendorf2007topological}, where closed 2d surfaces in the primal and dual complex correspond to checks and logical membranes. The membranes behave identically, but when it comes to the edges and vertices the similarity is lost, and moreover the X- and Z-homology complexes here are not dual to one another.

We can gain further insight by inspecting the dual of the homology complexes. We call these `syndrome complexes' since, by construction, the 1-skeleton of the dual of the X-homology complex is the X-type syndrome graph. These syndrome complexes have the same general properties as fusion complexes: that they must have four faces incident at each edge. The cells of the complex, which are bi-colorable, represent either resource states or checks. The faces of the syndrome complex represent the resource-state (or other primitive operation) stabilizer generators. These are the `trivial’ non-detectable errors corresponding to the loop in the syndrome graph that is the boundary of that face. Note that the size of these faces is related to the size of physical primitives available, and gives some insight into why syndrome graphs that account for circuit-level errors often appear to be triangulated.

\textbf{Interchanging fault-tolerant and syndrome complexes}.
Because the fusion complex (or in general, fault-tolerant complex) and syndrome complexes all obey the same constraints they are interchangeable in the following sense: we are free to identify any one complex as the fusion complex, and it follows that the other two represent the X- and Z-syndrome complexes. This framework gives another way of constructing fusion networks by starting from a target syndrome graph. For example, suppose we want to construct a fusion network that has a cubic syndrome graph. We can start with the cubic cell complex as the X-syndrome complex, compute the rest of the triplet and identify that the fusion complex is the ``alternated cubic'' complex. This example is simply a permutation of the labels in the example of Fig.~\ref{fig:triplet}. Different fusion complexes from the same triplet will generally have distinct geometries, and therefore will not have the same threshold. 

\subsection{Comparing ZX-diagrams with fault-tolerant complexes}

Fault-tolerant complexes provide a general, geometric description of fault tolerance protocols. Recently Ref.~\cite{ZXhappyfamilies} also described how the graphical language of the ZX-calculus can be used as a unifying framework for different computational models for fault tolerance. In Figs~\ref{fig:resource_states} and~\ref{fig:ZX_resources} we show the fault-tolerant complex and ZX-diagram representations of the example resource states in Section~\ref{sec:automated_search}.  We briefly comment on the complementary roles of these two approaches. The ZX-diagram language provides a clear framework for identifying commonalities between protocols when the primitive operations may appear very different (such as between circuit models and floquet-based schemes). By mapping operations to their underlying structure in the ZX-calculus the protocols can be re-expressed in a common language. This framework is very useful for translating methods between hardware implementations. However, this framework does not (as far as we can tell) enable a constructive approach to fault tolerance, schemes must be known a-priori and can then be reconstructed in the ZX formalism. Also, while the ZX-diagrams provide a complete description of the operations, they don't contain a description of the check operators which must be separately identified. Fault-tolerant complexes, on the other hand, contain a definition of both the operations and the check structure providing a framework to study the relationship of the two. Unlike ZX-diagrams fault-tolerant complexes also allow a generative search, and other constructive methods (see Appendix~\ref{app:construction}), so they can be used as an approach to identifying new schemes. The two frameworks play complementary roles in the study of fault tolerance schemes. In particular the ZX-diagram language provides a natural route to map the fault-tolerant complex construction into other computational models, as described in Sec.~\ref{sec:example_FT_complexes}. An appropriately designed filter-search could then identify candidate complex structures composed out of the required primitives. Finally we can map back to ZX-diagrammatic language to identify suitable qubit-worldlines and time ordering to construct a valid protocol.

\subsection{Boundaries and topological features}
We have described fault-tolerant complexes as they make up the bulk of a fault-tolerant protocol. In order to perform logical operations, we can insert topological features---such as boundaries, symmetry defects, and twists---into the complex in certain configurations~\cite{bombin2023logical}. In the FBQC setting these features can be realized by modifying the fusion measurements only, and do not require any alterations to the resource states. Examples of how to do this with the 6-ring network are shown in Section~VII. of Ref.~\cite{bombin2023logical}, and we describe how to implement them in a general fusion complex in Appendix~\ref{app:features}.




\section{Discussion}

We have introduced the fusion complex framework to geometrically define a large family of fault-tolerance protocols for many schemes of computation, which is particularly useful for FBQC. This construction allows a single geometric representation of fault-tolerance which captures the physical operations (resource states and fusions) as well as the check structure (syndrome graph). The fusion complex construction enables an automated search and simulation of a large number of fusion networks, which we have used to generate 627 new examples. Several of these complexes have an  improved threshold performance against both erasures and Pauli errors compared to previously publishes fusion networks. The examples we present here are based on uniform tilings, but there is no requirement for a uniform, periodic structure in the fusion complex, nor a requirement for the complex to be specified in advance of running the protocol, potentially being chosen dynamically at runtime.
 
Similar methods could be applied to find fault-tolerant complexes for other computational schemes (e.g., CBQC, FloBQC), by designing filter rules for complexes to constrain the primitives to those physically implementable in a given hardware paradigm. We hope that this will enable an acceleration of the search for improved fault-tolerance protocols for other computational models. 

The relationship between R-homology complexes and fusion complexes reveals a fundamental, geometric, relationship between the size of the computational primitives and the check weight and check structure of the protocol. As the development of fault tolerance methods becomes increasingly entwined with physically derived error models we hope this extension of geometric relationships will provide a useful theoretical tool for improving error tolerance. Future work may aim to solidify these relationships, possibly placing bounds on achievable thresholds given a particular set of primitives.

We have covered in detail only the bulk fault-tolerance properties, but a fusion complex representation can be a valuable tool for understanding the behavior of boundaries and other topological features---for instance allowing us to explore what modification of fusions (or resource states) can be used to introduce defects. In Appendix~\ref{app:features} we briefly outline the fusion complex representation of various topological features, and fully exploring the space of schemes remains an area for further study.

Beyond the surface code complexes described here, the same principles can be extended to other topological codes. In Appendix~\ref{app:color_code} we give related cell complex constructions of color codes, along with new examples. In Appendix~\ref{app:subsystem_code} we present related cell complex constructions for 3d single-shot subsystem codes suitable to some CBQC architectures, and we provide several new examples. The framework can be similarly used to carry out a generative search for new schemes. We hope future work can study the detailed performance of some of the new examples identified. In addition it would be interesting to consider schemes beyond topological codes. Desirable examples to study may include quantum LDPC codes~\cite{bravyi2014homological,tillich2013quantum,breuckmann2021quantum}, which when concatenated with topological codes can produce schemes with high thresholds and improved encoding efficiency~\cite{bombin2023logical,pattison2023hierarchical}.




\section*{Author contributions}
NN: Identified the fusion complex construction. 
NN, SR, HB, FP, YL: Developed framework, definitions and related fault-tolerant cell complex constructions including extensions in appendices.
YL: Developed search and filter methodology using Gavrog and performed simulations computing thresholds. 
CD, YL: developed cell complex based simulation framework
MP, TF, SR, YL: Studied fault tolerance properties of non-cubic fusion complexes. 
NN, SR, YL: Wrote the manuscript with input and review from the other authors. HB was on leave during the writing of the manuscript.

\section*{Acknowledgements}
The authors would like to thank Terry Rudolph,  Daniel Litinski and Dave Tuckett for many valuable conversations and their input on the manuscript, and all of our colleagues at PsiQuantum for helpful discussions.



\appendix

\section{Different representations of 2d surface codes}\label{app:surface_codes}

There are two common representations of 2d surface code families. We refer to these as the Kitaev and plaquette-style versions, the latter being introduced by Wen~\cite{wen2003quantum}.
The two representations are equivalent, in that any Kitaev surface code can be represented as a plaquette surface code (and vice-versa). Each is useful in different contexts. Indeed, the two are related by the Medial graph construction~\cite{bombin2007optimal,anderson2013homological}.

In particular, the definition of surface codes in Def.~\ref{def:surface_code} is the plaquette version, where qubits are on vertices of a 4-valent graph, and checks correspond to each plaquette (X-type or Z-type according to the face coloring).

The Kitaev surface code, on the other hand, can be defined on any 2d cell complex, but placing qubits on edges, and identifying vertices with X-checks and faces with Z-checks, as follows.

\begin{definition}\label{def:kitaev_surface_code}(Kitaev surface code)
  A stabilizer code $\mathcal{S}$ is a Kitaev surface code, if there exists a 2d cell complex $\complex_K = \{F, E, V\}$, with a qubit on each edge $e\in E$, such that $\mathcal{S} = \langle X_v, Z_f ~|~ \forall v\in V, f \in F \rangle$, where $X_v$ is a product of Pauli-$X$ operators on all edges incident to the vertex $v$, and $Z_f$ is a product of Pauli-$Z$ operators on all edges belonging to the face $f$. 
\end{definition}

Mapping from the Kitaev complex $\complex_K$ to the plaquette complex $\complex_P$ is achieved by the medial graph construction~ \cite{bombin2007optimal,anderson2013homological}. For every edge $\complex_K$ we place a vertex in $\complex_P$. We place an edge between two vertices in $\complex_P$ whenever the corresponding edges in $\complex_K$ are neighbours belonging to the same face.

\section{Constructive method for fusion networks based on intersecting spheres}
\label{app:construction}
Another way to generate fusion networks is based on intersecting spheres. This is motivated by an observation in 2d, that a valid 2d surface code can be generated by any configuration of intersecting circles (or lines, in general), where no more than two circles are allowed to intersect at a point. This defines a valid surface code with qubits as vertices (where circles intersect), and each closed area defining a check. The checks are bi-colorable, and each color can be assigned as Z-type or X-type arbitrarily.

One can similarly define a valid surface-code fusion complex by any configuration of intersecting spheres (or planes), where no more than two spheres may intersect along a line and no more than three may intersect at a point. This defines a cell complex with the correct properties for a fusion complex, as per Def.~\ref{def:fusion_complex}. An example is shown in Fig.~\ref{fig:sphere_construction}.

Unlike the search and filter method outlined in Section~\ref{sec:automated_search}, this is a constructive method and so provides a different path to identifying new instances of interest.

\begin{figure}
    \centering
    \includegraphics[width=\columnwidth]{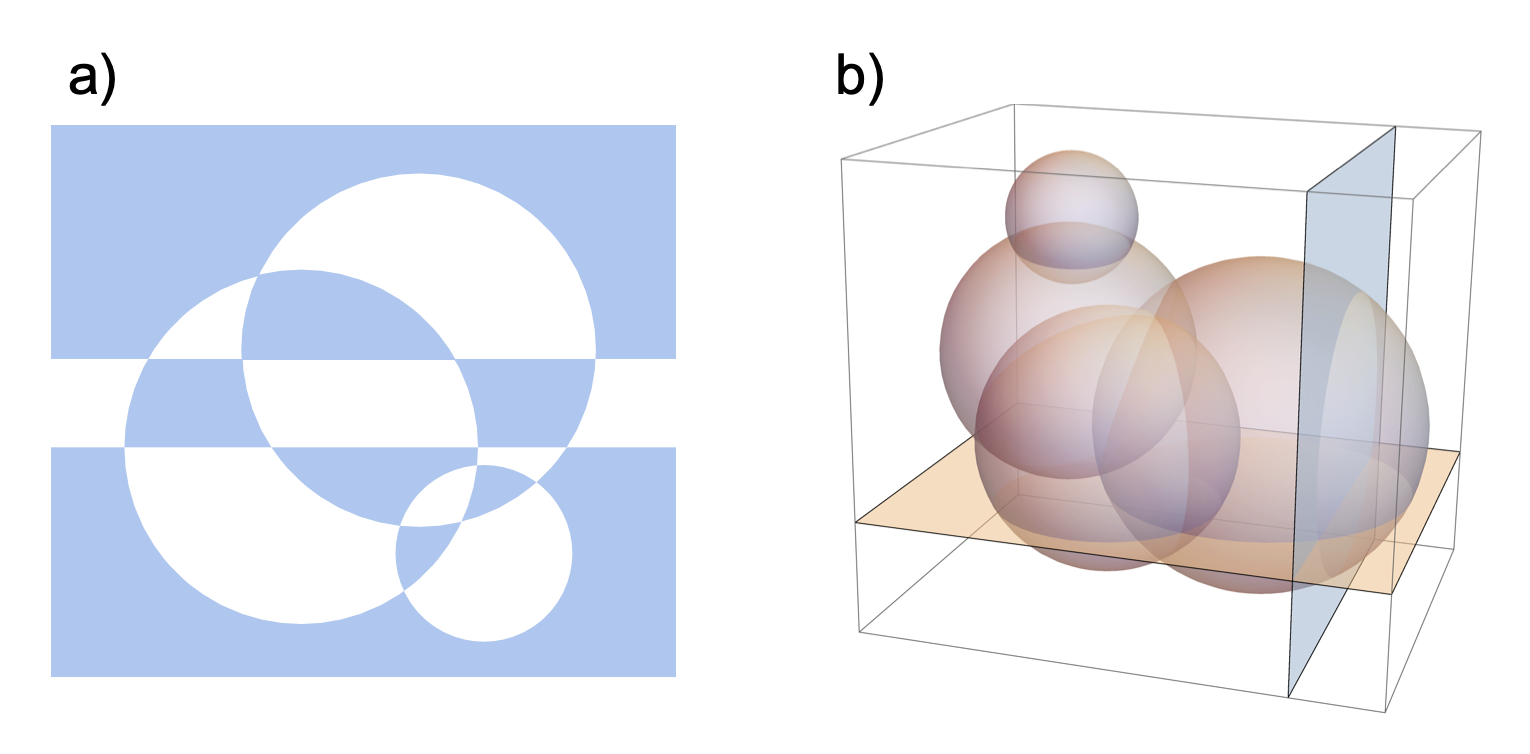}
    \caption{Illustration of construction of surface codes and fusion complexes from intersecting circles or spheres. a) A 2d surface code is constructed from intersecting circles, as long as no more than 2 circles intersect at a single point. Each intersection is a qubit and each closed area is a check. The surface is bi-colorable by construction. b) A fusion complex can be formed of intersecting spheres in 3-dimensions. No more than three spheres may intersect at a single point, and no more than two spheres may intersect along a line. The three cells created are bicolorable by construction, and each edge has four incident faces. Any slice through this will be a surface code. 
}
    \label{fig:sphere_construction}
\end{figure}

\section{Topological features in fusion complexes}\label{app:features}
In this section we show how to insert topological features into a fusion complex by modifying the measurement pattern. The types of features we consider are boundaries (both X-type and Z-type), the domain wall (also known as the $\zz_2$ domain wall), and twist defects, which are sufficient to perform any fault-tolerant quantum computation with surface codes~\cite{raussendorf2007topological, raussendorf2007fault, bombin2009quantum, bombin2010topological, horsman2012surface, fowler2012time, barkeshli2013twist, hastings2014reduced, yoder2017surface, brown2017poking,litinski2019game,landahl2021logical,bravyi2005universal,bravyi2012magic,bombin2023modular,litinski2019magic,li2015magic,bombin2022fault,chamberland2022circuit}. 

\textbf{Boundaries}.
The surface code has two types of 2d boundaries (here, boundary refers to a ``gapped boundary to vacuum")~\cite{kitaev2012models, levin2012braiding, lan2015gapped}, which we call X-type and Z-type (they are also known as rough and smooth in the literature~\cite{bravyi1998quantum}). One can understand the boundaries in terms of what types of checks are supported there. To define a X-type or Z-type boundary, we specify a surface in the dual complex (the R-homology complex), forming the boundary of a region of the fusion complex. Each face in this surface corresponds to a qubit from a resource state. To prepare an X-type (Z-type) boundary, we measure all qubits supported on this surface in the single-qubit X-basis (Z-basis). In doing so, only the vertices of the dual complex of one color support checks, namely the X-type (Z-type) checks. 

Another way of understanding these boundaries is as follows. Consider the surface-code surface state on the boundary after all fusions are performed. Then inserting an X-type (Z-type) boundary is the same as performing a transversal readout in the X-basis (Z-basis).

\textbf{Domain walls}.
A domain wall corresponds to a local breaking of the bi-colorability of the lattice. Consider the fusion complex. A domain wall is specified by 2d surface in the primal complex. We first consider a surface without boundary. Each face of this surface is incident to two 3-cells of different type (X-type and Z-type). A domain wall corresponds to a modification of the cell complex, whereby pairs of 3-cells on either side of the surface are ``glued'' together. Gluing two 3-cells together along a face gives a single 3-cell where the gluing face is removed (meaning the face, along with its edges and vertices are removed). Edges sharing a vertex on the gluing face become a single edge defined by their two distinct end points. 

This new cell complex corresponds to a modified fusion complex. Removed vertices and edges along the gluing surface mean some resource states and fusion are removed, respectively. Fusions along the glued edges are Hadamarded Bell measurements $\{XZ, ZX\}$. Correspondingly, the glued 3-cells correspond to mixed X- and Z-type checks. On either side of the domain wall, their restriction agrees with that of the usual X- or Z-type check in the bulk fusion complex. See Section~VII. of Ref.~\cite{bombin2023logical} for examples in the 6-ring network.

\textbf{Twist defects}.
Twist defects arise when we consider domain wall surfaces with boundaries~\cite{bombin2010topological}. They are 1d objects that live on the boundary of domain walls. Consider a domain wall surface with boundary, consisting of a set of edges of the fusion complex which we call the twist edges. On each of these edges, there are four incident faces (by definition of the fusion complex). One of these faces must belong to the domain wall surface--which we call the twist faces--while the others are usual bulk faces. We modify the fusion complex around the twist faces as follows. Each pair of 3-cells incident to a twist face is partially glued: we remove the twist face, and all edges and vertices on the twist face apart from the twist edge (and its corresponding vertices). 

The new cell complex corresponds to a modified fusion complex as follows. Resource states and fusions in the bulk of the domain wall are modified as per the previous domain wall rules. Resource states and fusions along the twist edges remain as per the bulk rules. There will be one unpaired qubit per resource state on the vertices of the twist edges, as the resource state they would otherwise be fused to is supported on the domain wall and thus removed. These qubits are to be measured in the single qubit $Y$ basis. The corresponding check operators along these partially glued 3-cells is given by the product of \textit{(i)} $XX$ or $ZZ$ measurements on edges of the 3-cell, determined by the cell-type on their respective side of the domain wall, \textit{(ii)} $XZ$ or $ZX$ for edges crossing the domain wall (again according to the cell type on either side of the domain wall), and \textit{(iii)} $YY$ or $Y$ for edges within the domain wall.

\section{Higher dimensional fusion networks}\label{app:higherdim}
One can extend the fusion-complex construction to higher dimensions using GHZ states as resource states (recall, GHZ states are a type of non-degenerate surface code state). This construction is motivated by the homological viewpoint of fault-tolerant cluster states~\cite{raussendorf2007topological,nickerson2018measurement}. 

Consider a $D$-dimensional cell complex $\mathcal{L}_D = \{C_D, C_{D-1}, \ldots, C_1, C_0\}$, where each $C_i$ corresponds to a set of $i$-dimensional cells. One can construct a GHZ-based fusion network as follows. Choose a dimension $k\in \{1,\ldots, D{-}1\}$. Then on each $k$-cell $c_k$, we place an $n_k$-qubit GHZ resource state, where $n_k$ is the number of $(k{+}1)$-dimensional cells incident to $c_k$. Similarly, on each $(k{+}1)$-dimensional cell $c_{k+1}$, we place an $n_{k+1}$-qubit GHX resource state, where $n_{k+1}$ is the number of $k$-dimensional cells belonging to $c_{k+1}$. Here, a GHX state is a GHZ state where a Hadamard has been applied to each qubit. Fusions are performed between every pair of resource states on a $(k{+}1)$-cell $c_{k+1}$ and a $k$-cell $c_k$ whenever $c_k$ belongs to $c_{k+1}$. We get an X-check for every $(k{-}1)$-dimensional cell $c_{k-1}$, given by the product of all $XX$ outcomes on fusions performed on resource supported on cells incident to  $c_{k-1}$. We get a Z-check for every $(k{+}2)$-dimensional cell $c_{k+2}$ given by the product of all $ZZ$ outcomes on fusions performed on resource supported on cells belonging to $c_{k+2}$.

When $D=3$, the corresponding fusion networks in this construction are fusion complexes in disguise; one can define a fusion complex giving rise to the same resource states and fusions (and therefore checks).

\section{From fusion complexes to 3d subsystem codes}\label{app:subsystem_code}

The fusion complex framework can also be used to generate 3d subsystem codes useful for single-shot quantum error correction~\cite{bombin2015single}. We provide a mapping from fusion complexes with bi-colorable vertices to 3d subsystem codes, and present some new examples. These subsystem codes are qubit subsystem-code versions of the more traditional 3d stabilizer toric codes. For example, the 3d cubic lattice fusion complex maps to the 3d subsystem toric code of Ref.~\cite{kubica2022single}. Thus, the fusion complex framework gives a constructive and generative way to generalize the subsystem toric code (STC) models of Refs.~\cite{kubica2022single,bridgeman2023lifting}, providing a family of examples of 3d single-shot error correcting codes that have been of recent interest~\cite{bombin2015single,brown2016fault,kubica2022single,bridgeman2023lifting, stahl2023single}. 

A subsystem code \cite{poulin2005stabilizer,bacon2006operator} is specified by a (not-necessarily commuting) subgroup of Pauli operators $\gauge$ called the gauge group. Stabilizers for the code are obtained (up to phases) by $\stabilizer = \mathcal{Z}(\gauge) \cap \gauge$, where $\mathcal{Z}(\gauge)$ consists of all Pauli operators commuting with all elements of $\gauge$. Thus in particular, $\stabilizer \subset \gauge$, such that stabilizer outcomes can be inferred by measuring gauge generators. Some subsystem codes, like gauge color codes and subsystem toric codes, allow ``gauge fixing" between different stabilizer codes (meaning the target stabilizer code is described by a subgroup of the gauge group)~\cite{paetznick2013universal,bombin2015gauge}. Here we show that any fusion complex with bi-colorable vertices can be used to define a subsystem toric code.

\begin{definition}\label{def:subsystem_code}(3d subsystem-toric-code complex)
A 3-dimensional cell complex, $\complexF=\{C,F,E,V\}$, is a subsystem-toric-code complex if every edge has exactly four incident faces, and it has bi-colorable vertices. 
\end{definition}
The subsystem code is defined on such a complex as follows (recall that the 3-cells are also bicolorable by construction).
\begin{itemize}
    \item A qubit is associated to each 1-cell $e\in E$.
    \item A gauge generator $g\in \gauge$ is associated to each corner of a 3-cell, and is $X$-type or $Z$-type according to the color of the 3-cell. 
    \item A stabilizer $s\in \stabilizer$ is associated to each 3-cell $c \in C$, and is $X$-type or $Z$-type according to the color of the 3-cell.
\end{itemize}

In particular, the stabilizers for each 3-cell $c\in C$ are a product of either $X$ or $Z$ (depending on the color of $c$) on all qubits supported on edges of that cell. To define the gauge generators, take an $X$-type ($Z$-type) 3-cell $c\in C$, and a vertex $v$ supported on $c$. Then the gauge generator $g_{(v,c)}\in \gauge$ associated to the corner $(v,c)$ of the 3-cell is given by a product of $X$ ($Z$) operators on all qubits supported on the edges of $c$ that are incident to $v$. Stabilizers commute with each other and all gauge generators. 

Mapping from the FBQC version of the fusion complex to the 3d subsystem code, one simply replaces fusions by qubits, and maps the support of resource-state stabilizers and check operators to these qubits appropriately: resource state stabilizers correspond to gauge generators and checks of the fusion network correspond to stabilizers of the subsystem code. Thus the set of gauge generators around each vertex defines a 2d stabilizer surface code on a sphere. While all 2d surface code stabilizers commute within each surface code on each vertex, neighbouring surface codes overlap, and their stabilizers do not necessarily commute.

We can verify that this construction gives a well-defined subsystem code with $\stabilizer = \mathcal{Z}(\gauge) \cap \gauge$ (up to signs). To do so, we show that (i), gauge generators commute with stabilizers, and (ii) that $\stabilizer \subset \gauge$. For (i), all gauge generators around a vertex form a 2d surface code, and thus commute. Gauge generators therefore commute with stabilizers, as each gauge generator can be obtained by restricting a stabilizer on a cell $c\in C$ to one of its corners. For (ii) we use the bi-colorability of the vertices of the complex. In particular, the corners of the 3-cell $c \in C$ supporting a stabilizer, are partitioned into two colors. Taking the product of gauge generators on the corners of (either) one of the colors gives the stabilizer on that 3-cell. Note that each stabilizer can be obtained in two ways as a product of gauge generators. This redundancy also enables single-shot error correction for these models (although we do not prove this here). 

The cell complex has bi-colourable vertices if and only if its 1-skeleton has even cycles (when viewed as a graph). In other words, each face of the complex must have even length. This rules out any complex with triangular faces (and thus tetrahedral volumes). Such complexes do not form valid subsystem surface codes.

Noting that the FBQC realization of the fusion complex is also described by a (different kind of) subsystem code (see Refs.~\cite{bartolucci2023fusion, bombin2023logical}), where fault-tolerant computation is described as a gauge fixing from the resource group to the fusion group. That single-shot 3d codes and (2+1)d fault-tolerant quantum computation are described by the same underlying fusion complex may clarify the connection between fault-tolerant quantum instruments and single-shot codes. 

\textbf{Example: cubic complex}.
The cubic complex corresponds to the 3d STC of Ref.~\cite{kubica2022single}. Gauge generators are all 3-body operators, and stabilizers are 12-body operators.

\begin{figure}
    \centering
    \includegraphics[width=0.3\columnwidth]{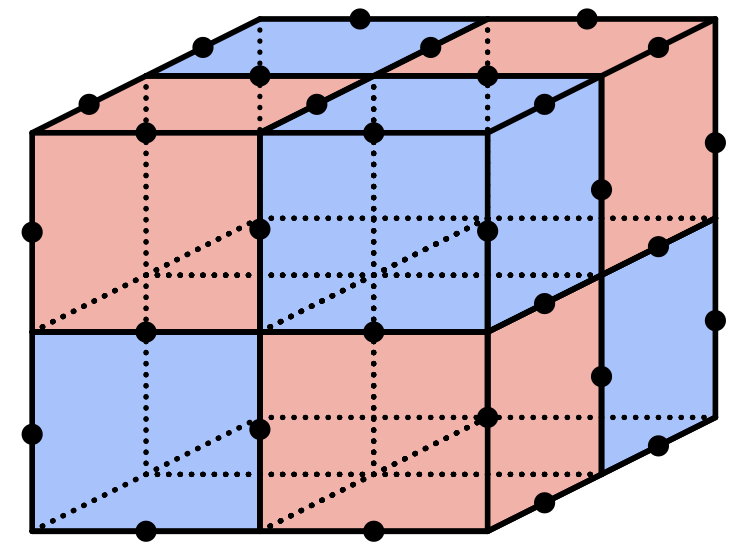}
    \includegraphics[width=0.3\columnwidth]{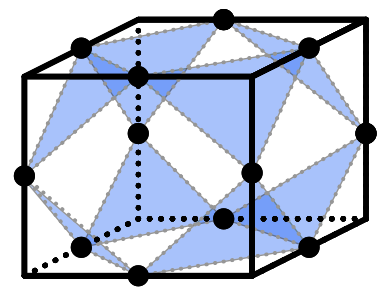}
    \includegraphics[width=0.3\columnwidth]{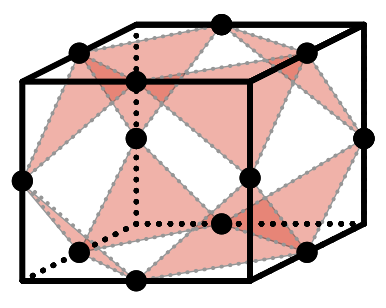}
    \includegraphics[width=0.3\columnwidth]{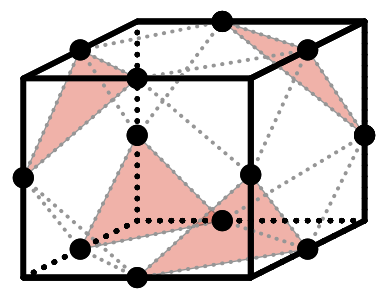}
    \includegraphics[width=0.3\columnwidth]{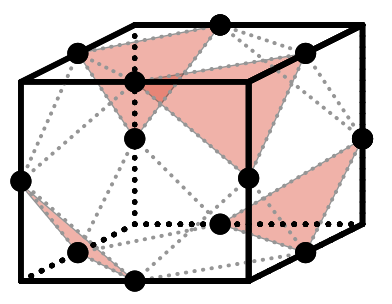}
    \caption{(top left) Cubic fusion complex with qubit locations on edges. (top middle, right) $X$ and $Z$ gauge generators on each corner, in blue and red, respectively. (bottom) Two ways of obtaining a $Z$ stabilizer for each $Z$-cell (similarly for $X$). }
    \label{fig:cubic_subsystem_code}
\end{figure}

\textbf{Example: 4-star complex}.
The 4-star complex corresponds to a 3d STC with 4-body and 2-body gauge generators, and stabilizers that are 24-body operators. 

\begin{figure}
    \centering
    \includegraphics[width=0.8\columnwidth]{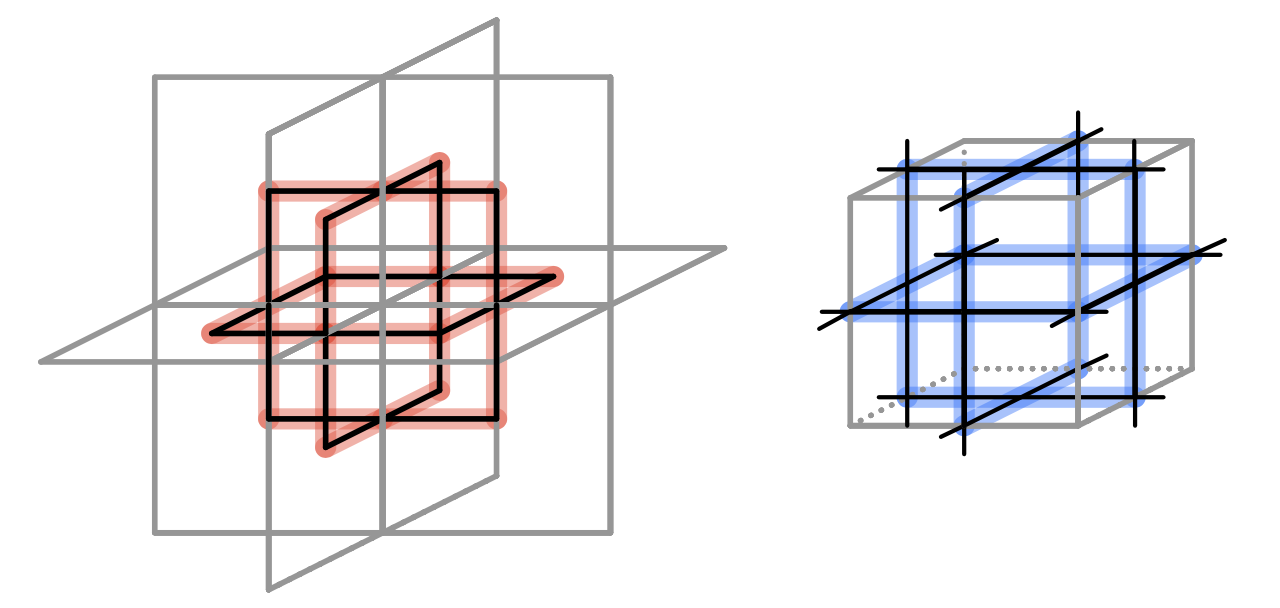}
    \includegraphics[width=0.95\columnwidth]{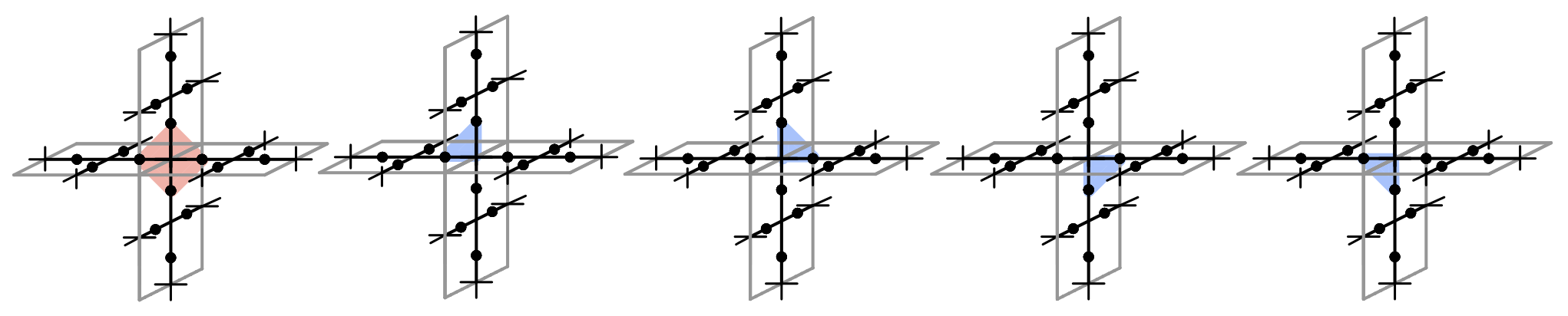}
    \includegraphics[width=0.95\columnwidth]{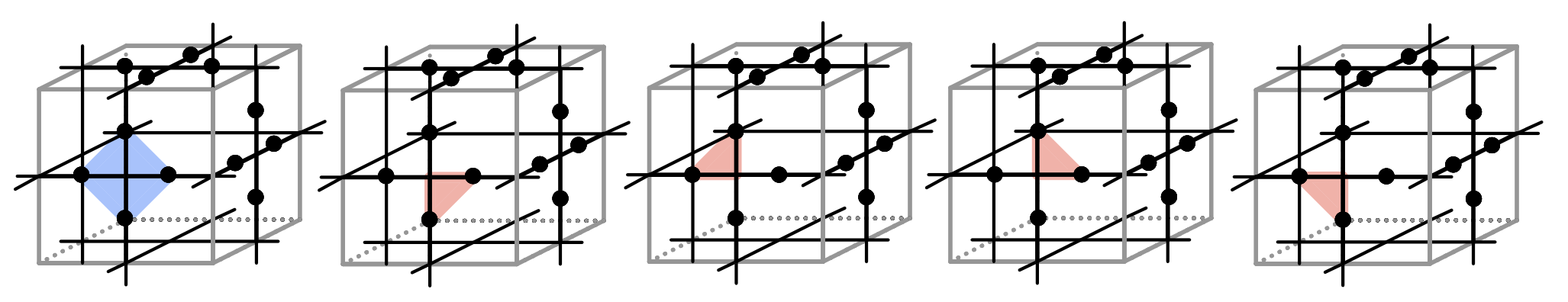}
    \caption{Subsystem code for the 4-star complex. Qubits live on bold edges of the black lines. To make illustration easier, we express the code in terms of a reference cubic lattice depicted in gray. (top) $Z$ and $X$ stabilizers correspond to vertices and cells of the cubic lattice. (middle) $X$ and $Z$ gauge generators on edges of the cubic lattice, in blue and red, respectively. (bottom) $X$ and $Z$ gauge generators on faces of the cubic lattice, in blue and red, respectively. Each stabilizer can be obtained by taking a product of either a set of face or edge gauge operators. }
    \label{fig:4star_subsystem_code}
\end{figure}

\textbf{Example: Bipartite $n$-star complex.} Take any 3d cell complex. Construct the fusion complex as per Appendix~\ref{app:higherdim}, consisting of GHZ and GHX resource states. This forms a valid subsystem toric code complex.

\textbf{Example: Enneahedral complex.}
Consider the complex in Fig,~\ref{fig:enneahedral_subsystem_code}, it has two types of cells, one with 9 square faces (enneahedra), and one with three square faces (trihedra).

\begin{figure}
    \centering
    \includegraphics[width=0.6\columnwidth]{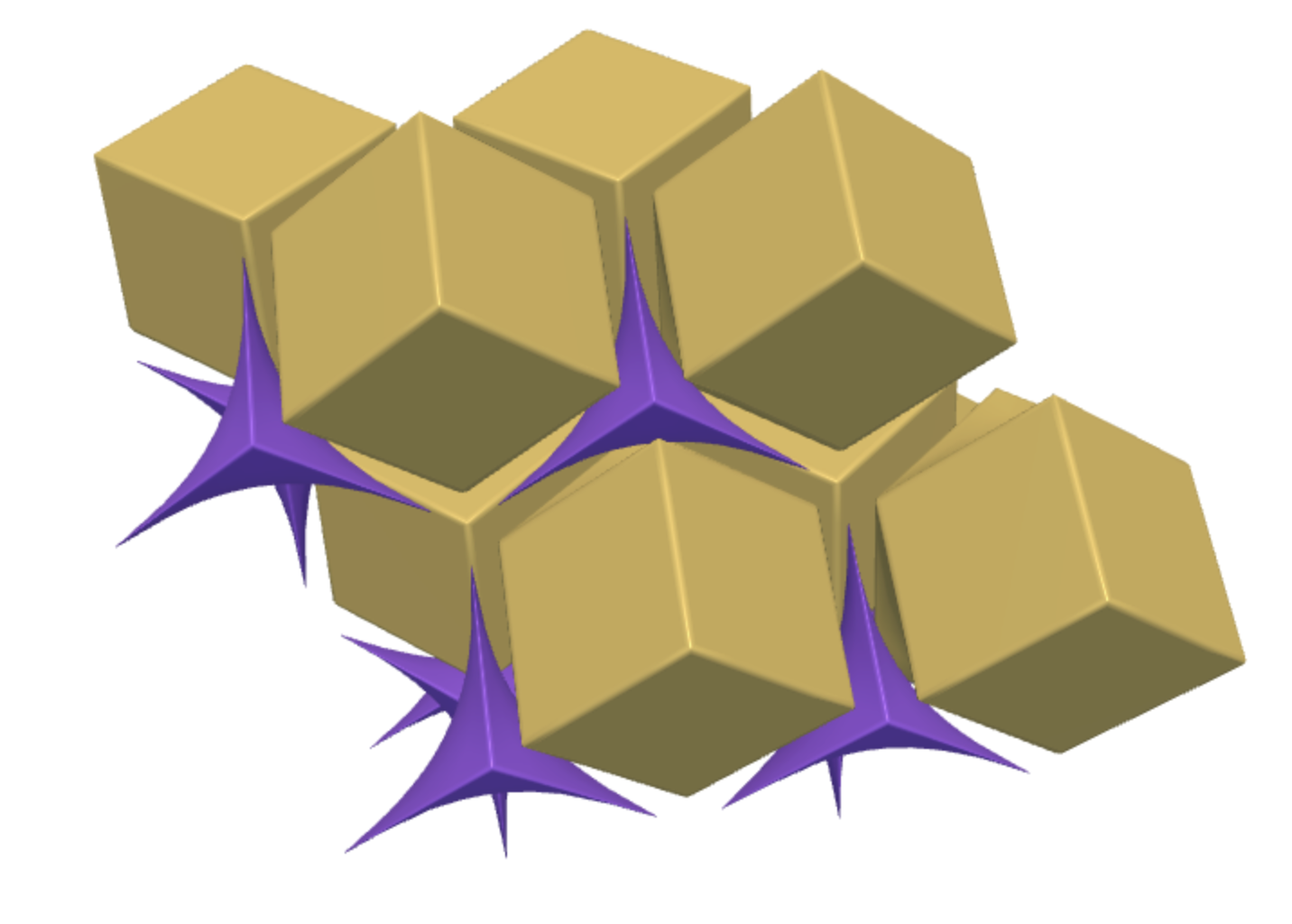}
    \includegraphics[width=0.7\columnwidth]{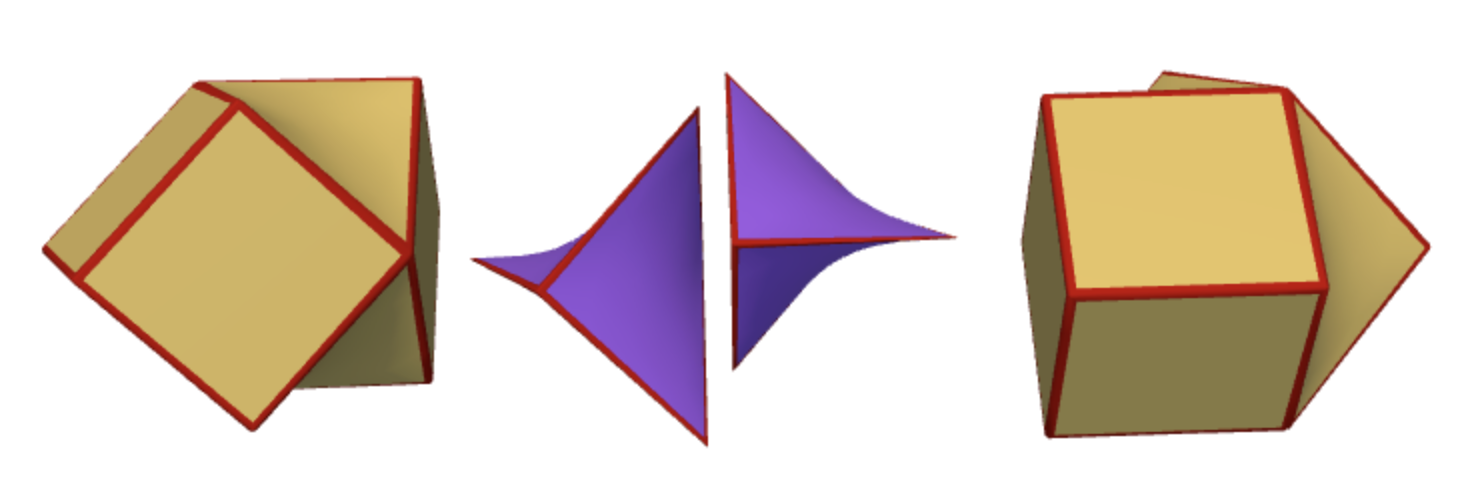}
    \caption{The enneahedral complex is a valid subsystem code complex. Gauge generators are weight 2, 3, and 4. Checks are weight 6 and 18. (top) The enneahedral complex. (bottom) Examples of the two types of cells. }
    \label{fig:enneahedral_subsystem_code}
\end{figure}

\section{Color-code fusion complexes}
\label{app:color_code}

The fusion-complex construction can be extended to include color-code based schemes~\cite{bombin2006topological,kubica2015unfolding,bombin20182d}, i.e., where the resource states are color codes on spheres, and the boundary of any fused region gives a color code state. 

Briefly, the construction is as follows. We work in the dual picture (the R-homology picture), where a color-code fusion complex can be defined on any 3-dimensional cell complex with some constraints, as follows.

\begin{definition}\label{def:color_code_complex} (Color-code fusion complex) A 3-dimensional cell complex, $\complexF = \{C, F, E, V\}$ is a color-code fusion complex, if \begin{itemize}
    \item The 0-cells are 3-colorable.
    \item The 2-skeleton is a triangulation (i.e., all faces are triangles). 
\end{itemize}
\end{definition}

In this picture, the color-code fusion complex corresponds to a fusion network with the following properties:
\begin{itemize}
    \item Resource states are associated to every 3-cell.
    \item Qubits and fusions are associated to every 2-cell. 
    \item An $X$-check and $Z$-check is associated to every 0-cell. 
\end{itemize}

In particular, the resource state on every 3-cell is a non-degenerate (meaning the stabilizers of the resource state specify a unique state) color code on a (topological) sphere. The faces host qubits of the resource state, and vertices of the 3-cell give a resource-state stabilizer of both $X$-type and $Z$-type (defined on all 2-cells incident to the vertex). Commutativity is guaranteed by the colorability of the vertices and the fact that every 2-cell is a triangle. Similarly to the surface code R-complex, where the resource state is a 2d dual of the plaquette-version of the surface code\footnote{The requirements in Def.~\ref{def:color_code_complex} can be compared to the conditions for surface-code fusion complexes in the R-homology complex (Def.~\ref{def:homologycomplex}), where every 2-cell is a square (and the vertices being locally 2-colorable follow from this).}; here the resource state is a 2d dual of the original topological color code of Ref.~\cite{bombin2006topological}, see Fig.~\ref{fig:colorcode}, for example.

Fusions ($\{ XX, ZZ \}$) occur between qubits on neighbouring resource states that share a face. Checks are formed as the product of all $X$-type fusions (or $Z$-type fusions) on the faces incident to a given vertex. This is clearly in the measurement group, and it can straightforwardly be verified that it is also in the resource state group.

Color-code fusion complexes define color-code networks, in that after performing fusions in any region of the network, the state on the boundary is a color code (up to signs of the stabilizers). One can also define 3d gauge color codes from the color-code fusion complex, where, much like the subsystem toric code, the gauge color code can also be understood as a collection of intersecting 2d color codes~\cite{roberts2020symmetry,bombin2015gauge}. Pursuing fusion complex constructions for other computational schemes~\cite{koenig2010quantum,ellison2022pauli,roberts20203,bauer2023topological,delfosse2013tradeoffs,higgott2023constructions} may be interesting. 

\begin{figure}
    \centering
    \includegraphics[width=0.4\columnwidth]{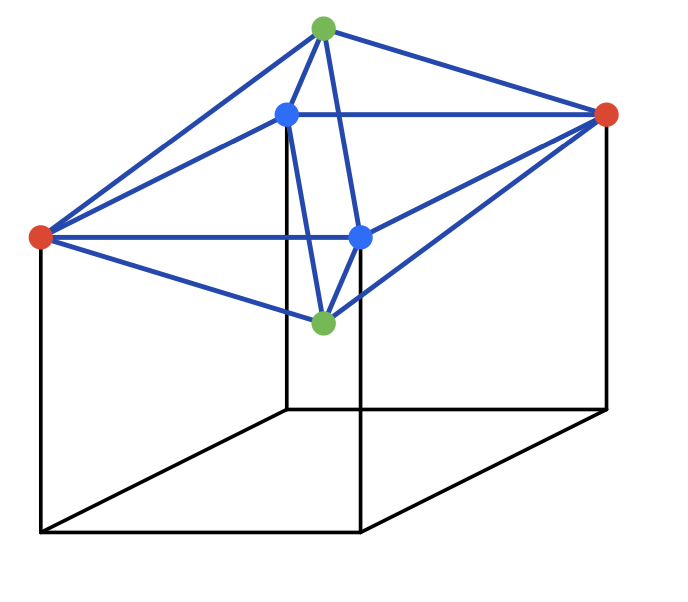} \qquad 
    \includegraphics[width=0.4\columnwidth]{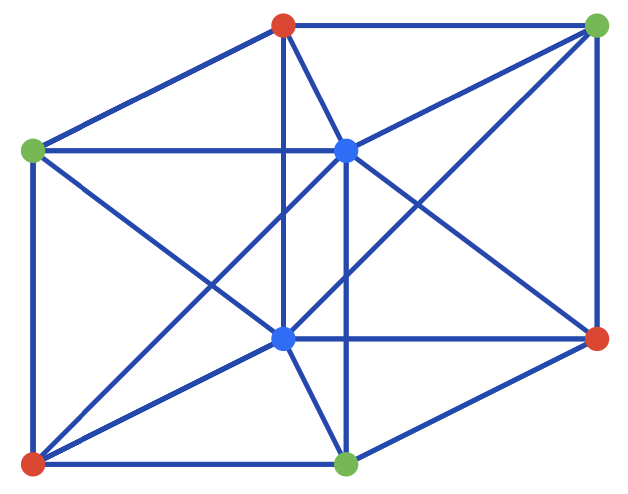}
    \caption{Two examples of color-code fusion networks in the dual picture. (left) Unit cell for complex with 8-qubit resource states, and checks of weight 12 and 24. The resource state forms a square bipyramid, tiling a bcc lattice (only one resource state cell is depicted).
    (right) Unit cell for complex with 12-qubit resource states, and uniform checks of weight 18. The resource state forms a hexagonal bipyramid.}
    \label{fig:colorcode}
\end{figure}

\section{Table of Examples}

Table~\ref{tab:examples_longlist} gives an extended list of examples with D-symbols up to size 8.

\newpage

\bibliographystyle{unsrt}
\bibliography{fbqc}

\newpage 

\begin{table}[]
\begin{tabular}{lll}
\hline
D-Symbol                                                                                                                                                                     & C                  & R            \\ \hline
\textless{}1 3:1,1,1,1:4,3,4\textgreater{}                                                                                                                                         & 12                 & 6            \\ \hline
\textless{}2 3:2,1 2,1 2,2:6,2 4,4\textgreater{}                                                                                                                                   & 24                 & 3x4          \\ \hline
\textless{}2 3:1 2,1 2,1 2,2:3 3,3 4,4\textgreater{}                                                                                                                               & 2x6+12             & 12           \\ \hline
\textless{}4 3:2 4,1 2 3 4,3 4,2 4:4 6,2 6,4\textgreater{}                                                                                                                         & 2x24               & 4x6          \\ \hline
\textless{}4 3:1 2 3 4,1 2 4,1 3 4,2 4:3 3 6,3 3,4\textgreater{}                                                                                                                   & 2x6+2x18           & 4x6          \\ \hline
\textless{}4 3:1 2 3 4,1 2 4,1 3 4,2 4:4 4 6,2 3,4\textgreater{}                                                                                                                   & 3x4+36             & 6x4          \\ \hline
\textless{}4 3:2 4,1 2 3 4,1 2 3 4,3 4:4 4,2 4 4 3,4 4\textgreater{}                                                                                                               & 3x8+24             & 3x4+12       \\ \hline
\textless{}4 3:1 2 3 4,1 2 3 4,1 3 4,2 4:3 3 4 4,4 4 2,4\textgreater{}                                                                                                             & 3x4+12+24          & 3x8          \\ \hline
\textless{}5 3:1 3 5,2 3 4 5,1 4 5,1 2 3 5:3 4,3 4,4 4\textgreater{}                                                                                                               & 2x10               & 10           \\ \hline
\textless{}6 3:1 2 3 5 6,1 2 4 6,1 3 4 5 6,2 6 5:4 4 4,2 3 4,4 4\textgreater{}                                                                                                     & 4+20               & 2x6          \\ \hline
\textless{}6 3:1 2 3 5 6,1 2 4 6,1 3 4 5 6,2 6 5:4 4 12,2 3 2,4 4\textgreater{}                                                                                                    & 3x4+60             & 12x3         \\ \hline
\textless{}6 3:1 2 3 4 5 6,1 3 5 6,2 3 6 5,1 4 5 6:3 4 4 6,3 3,4 4\textgreater{}                                                                                                   & 2x9+18             & 3x6          \\ \hline
\textless{}6 3:1 2 3 4 6,1 2 3 5 6,2 4 5 6,1 3 4 5 6:4 3 3 3,4 3 3,4 4 4\textgreater{}                                                                                             & 8x6+24             & 6+3x10       \\ \hline
\textless{}6 3:1 2 3 4 6,1 2 3 5 6,2 4 5 6,1 3 4 5 6:4 4 4 3,2 6 3,4 4 4\textgreater{}                                                                                             & 6x4+48             & 6+3x10       \\ \hline
\textless{}7 3:1 2 3 4 5 7,1 3 4 6 7,2 3 5 6 7,4 2 3 5 6 7:3 8 3 3,3 3 3,4 4 4 4\textgreater{}                                                                                     & 8x6+36             & 7x6          \\ \hline
\textless{}8 3:1 2 3 4 5 6 7 8,2 4 6 8,2 5 7 8,3 4 6 8:6 6 4 6,2 3,4\textgreater{}                                                                                                 & 4x6+2x36           & 12x4         \\ \hline
\textless{}8 3:2 4 6 8,2 3 4 5 6 7 8,3 4 7 8,2 5 6 8:3 4 4 4,4 2 4,4\textgreater{}                                                                                                 & 6x8+2x24           & 6x8          \\ \hline
\textless{}8 3:2 4 6 8,1 2 3 4 5 6 7 8,3 4 7 8,2 5 6 8:4 4 4 6,2 8 8 2,4\textgreater{}                                                                                             & 3x16+48            & 6x8          \\ \hline
\textless{}8 3:2 4 6 8,1 2 3 4 5 6 7 8,3 4 7 8,2 5 6 8:4 8 8 6,2 4 4 2,4\textgreater{}                                                                                             & 3x16+48            & 12x4         \\ \hline
\textless{}8 3:2 4 6 8,1 2 3 4 8 7,1 2 5 6 7 8,3 4 8 7:6 6 4,2 3 3 3,4 4\textgreater{}                                                                                             & 2x12+4x18          & 6x4+4x6      \\ \hline
\textless{}8 3:2 4 6 8,3 2 4 7 6 8,1 2 3 4 7 8,5 6 7 8:4 4,4 2 4 2 6,4 4\textgreater{}                                                                                             & 4x12+3x16          & 6x4+24       \\ \hline
\textless{}8 3:2 4 6 8,3 2 4 7 6 8,1 2 3 4 7 8,5 6 7 8:4 4,6 2 3 2 6,4 4\textgreater{}                                                                                             & 4x12+2x24          & 4x6+2x12     \\ \hline
\textless{}8 3:2 4 6 8,3 2 4 7 6 8,1 2 3 4 7 8,5 6 7 8:4 4,6 2 4 2 4,4 4\textgreater{}                                                                                             & 6x8+48             & 4x6+3x8      \\ \hline
\textless{}8 3:2 4 6 8,1 2 3 4 5 6 7 8,3 4 7 8,5 6 7 8:4 8 4 8,2 4 6 2,4 4\textgreater{}                                                                                           & 3x16+48            & 6x4+4x6      \\ \hline
\textless{}8 3:1 2 4 5 7 8,1 3 4 6 7 8,2 3 4 8 7,1 5 6 7 8:3 3 3 3,3 3 4,4 4\textgreater{}                                                                                         & 4x6+2x12           & 2x12         \\ \hline
\textless{}8 3:2 4 6 8,1 2 3 4 7 6 8,1 2 5 6 7 8,3 4 7 8:4 4 4,2 2 3 4 4,4 4\textgreater{}                                                                                         & 2x4+24             & 2x4+8        \\ \hline
\textless{}8 3:2 4 6 8,1 2 3 4 7 6 8,1 2 5 6 7 8,3 4 7 8:4 4 4,3 2 3 4 3,4 4\textgreater{}                                                                                         & 2x6+2x18           & 4x6          \\ \hline
\textless{}8 3:2 4 6 8,1 2 3 4 7 6 8,1 2 5 6 7 8,3 4 7 8:4 4 8,3 4 3 2 4,4 4\textgreater{}                                                                                         & 4x24               & 6x4+4x6      \\ \hline
\textless{}8 3:2 4 6 8,1 2 3 4 7 6 8,1 2 5 6 7 8,3 4 7 8:6 6 8,2 3 3 2 3,4 4\textgreater{}                                                                                         & 12+36              & 4x3+3x4      \\ \hline
\textless{}8 3:2 4 6 8,3 2 4 7 6 8,1 2 3 4 5 6 8,5 6 7 8:4 4,4 2 3 3 3,4 4 4\textgreater{}                                                                                         & 4x12               & 2x3+2x9      \\ \hline
\textless{}8 3:2 4 6 8,1 2 3 4 7 6 8,1 2 5 6 7 8,3 4 7 8:4 4 12,2 4 3 2 4,4 4\textgreater{}                                                                                        & 3x8+72             & 12x4         \\ \hline
\textless{}8 3:2 4 6 8,1 2 3 4 7 6 8,1 2 5 6 7 8,3 4 7 8:4 4 12,3 3 3 2 3,4 4\textgreater{}                                                                                        & 2x12+2x36          & 8x3+4x6      \\ \hline
\textless{}8 3:1 2 3 5 6 8,1 2 4 5 7 8,1 3 5 6 7 8,2 6 7 8:3 3 3 3,2 4 4 3,4 4\textgreater{}                                                                                       & 2x3+2x9+2x12       & 2x12         \\ \hline
\textless{}8 3:1 2 3 5 6 8,2 4 5 7 8,1 3 4 5 6 7 8,2 6 7 8:6 3 3,4 4 2 2,4 4 4\textgreater{}                                                                                       & 12x3+60            & 3x4+6x6      \\ \hline
\textless{}8 3:1 2 3 4 5 6 7 8,1 2 3 4 5 7 8,2 4 6 8,3 5 8 7:3 3 3 3 3 4 3,2 4 4,4\textgreater{}                                                                                   & 4x3+12+24          & 3x8          \\ \hline
\textless{}8 3:2 4 6 8,1 2 3 4 5 6 7 8,1 2 5 6 7 8,3 4 7 8:4 4 4 4,2 4 8 2 3 4,4 4\textgreater{}                                                                                   & 3x8+3x16+24        & 6x4+24       \\ \hline
\textless{}8 3:2 4 6 8,1 2 3 4 5 6 7 8,1 2 5 6 7 8,3 4 7 8:4 4 4 4,3 3 2 4 3 4,4 4\textgreater{}                                                                                   & 6x8+2x12+24        & 8x3+24       \\ \hline
\textless{}8 3:2 4 6 8,1 2 3 4 5 6 7 8,1 2 5 6 7 8,3 4 7 8:4 4 6 6,2 4 8 2 2 4,4 4\textgreater{}                                                                                   & 3x8+24+48          & 6x4+3x8      \\ \hline
\textless{}8 3:2 4 6 8,1 2 3 4 5 6 7 8,1 2 5 6 7 8,3 4 7 8:4 4 6 6,3 3 2 6 3 2,4 4\textgreater{}                                                                                   & 2x12+24            & 4x3+12       \\ \hline
\textless{}8 3:2 4 6 8,1 2 3 4 5 6 7 8,1 2 5 6 7 8,3 4 7 8:4 4 8 8,3 4 2 4 3 2,4 4\textgreater{}                                                                                   & 3x16+2x24          & 8x3+3x8      \\ \hline
\textless{}8 3:2 4 6 8,1 2 3 4 5 6 7 8,1 2 5 6 7 8,3 4 7 8:6 6 8 8,2 3 4 2 2 3,4 4\textgreater{}                                                                                   & 2x12+24+48         & 8x3+6x4      \\ \hline
\textless{}8 3:2 4 6 8,3 2 4 7 6 8,1 2 3 4 5 6 7 8,5 6 7 8:4 4,2 3 3 4 3 4,4 4 4 4\textgreater{}                                                                                   & 8x6+48             & 6x4+2x6+12   \\ \hline
\textless{}8 3:2 4 6 8,3 2 4 7 6 8,1 2 3 4 5 6 7 8,5 6 7 8:4 4,2 4 4 8 2 3,4 4 4 4\textgreater{}                                                                                   & 6x8+48             & 3x4+3x8+12   \\ \hline
\textless{}8 3:1 2 3 5 6 8,1 2 4 5 7 8,1 3 4 5 6 7 8,2 6 7 8:3 3 3 3,3 3 3 6 3,4 4 4\textgreater{}                                                                                 & 5x6+18             & 6+18         \\ \hline
\textless{}8 3:1 2 3 5 6 8,1 2 4 5 7 8,1 3 4 5 6 7 8,2 6 7 8:3 3 3 3,4 6 4 2 2,4 4 4\textgreater{}                                                                                 & 12x3+12+48         & 3x4+3x12     \\ \hline
\textless{}8 3:1 2 3 5 6 8,1 2 4 5 7 8,1 3 4 5 6 7 8,2 6 7 8:4 4 3 3,2 3 4 8 3,4 4 4\textgreater{}                                                                                 & 3x4+6x8+36         & 4x12         \\ \hline
\textless{}8 3:1 2 3 5 6 8,1 2 4 5 7 8,1 3 4 5 6 7 8,2 6 7 8:4 4 3 3,3 3 4 8 2,4 4 4\textgreater{}                                                                                 & 6x8+4x12           & 3x4+36       \\ \hline
\textless{}8 3:1 2 3 5 6 8,1 2 4 5 7 8,1 3 4 5 6 7 8,2 6 7 8:4 4 6 6,2 3 3 2 4,4 4 4\textgreater{}                                                                                 & 3x4+3x12+48        & 12x3+12      \\ \hline
\textless{}8 3:1 2 3 5 6 8,1 2 4 5 7 8,1 3 4 5 6 7 8,2 6 7 8:4 4 9 9,2 3 2 2 4,4 4 4\textgreater{}                                                                                 & 3x4+36+48          & 12x3+3x4     \\ \hline
\textless{}8 3:1 2 4 5 6 8,1 3 4 5 7 8,2 3 4 6 8,5 2 3 4 6 7 8:3 3 3 3,3 3 4,4 4 4 4\textgreater{}                                                                                 & 8x6+4x12           & 6+3x14       \\ \hline
\textless{}8 3:1 2 3 4 5 6 7 8,1 2 4 5 7 8,1 3 5 6 7 8,2 6 7 8:3 3 4 4 4 4,3 4 2 3,4 4\textgreater{}                                                                               & 6x4+2x6+12+48      & 8x6          \\ \hline
\textless{}8 3:1 2 3 4 5 6 7 8,1 2 3 4 5 6 7 8,2 4 6 8,3 5 7 8:3 3 3 3 3 4 3 4,2 4 4 4,4\textgreater{}                                                                             & 8x3+2x12+2x24      & 6x8          \\ \hline
\textless{}8 3:1 2 4 5 6 8,1 3 4 5 7 8,2 3 4 6 7 8,5 2 3 4 6 7 8:3 3 3 6,3 3 3 3,4 4 4 4 4\textgreater{}                                                                           & 8x6+2x24           & 8x6          \\ \hline
\end{tabular}
\caption{Table of fusion complexes.}
\label{tab:examples_longlist}
\end{table}

\end{document}